\newcommand{\doctype}{TECH}
\newcommand{\real}{\mathbb{R}}
\newcommand{\bdm}{\begin{displaymath}}
\newcommand{\edm}{\end{displaymath}}
\newcommand{\bea}{\begin{eqnarray*}}
\newcommand{\eea}{\end{eqnarray*}}
\newcommand{\bean}{\begin{eqnarray}}
\newcommand{\eean}{\end{eqnarray}}
\newcommand{\prob}{\mathbb{P}}
\newcommand{\expec}{\mathbb{E}}
\newcommand{\var}{\mathrm{Var}}
\newcommand{\numnode}{\ensuremath{m}}
\newcommand{\nodenum}{\numnode}
\newcommand{\defn}{\ensuremath{: \, = }}
\newcommand{\convdist}{\ensuremath{\stackrel{d}{\rightarrow}}}
\newcommand{\Exs}{\ensuremath{\mathbb{E}}}
\newcommand{\nodeSet}{\ensuremath{V}}
\newcommand{\vertex}{\ensuremath{v}}
\theoremstyle{plain}
\newtheorem{theorem}{Theorem}
\newtheorem{corollary}{Corollary}
\newtheorem{lemma}{Lemma}
\newcommand{\widgraph}[2]{\includegraphics[keepaspectratio,width=#1]{#2}}
\newcommand{\myparagraph}[1]{{\bf{#1}}}
\long\def\@makecaption#1#2{
        \vskip 0.8ex
        \setbox\@tempboxa\hbox{\small {\bf #1:} #2}
        \parindent 1.5em  %% How can we use the global value of this???
        \dimen0=\hsize
        \advance\dimen0 by -3em
        \ifdim \wd\@tempboxa >\dimen0
                \hbox to \hsize{
                        \parindent 0em
                        \hfil
                        \parbox{\dimen0}{\def\baselinestretch{0.96}\small
                                {\bf #1.} #2
                                %%\unhbox\@tempboxa
                                }
                        \hfil}
        \else \hbox to \hsize{\hfil \box\@tempboxa \hfil}
        \fi
        }
\long\def\barenote#1{
    \insert\footins{\footnotesize
    \interlinepenalty\interfootnotelinepenalty
    \splittopskip\footnotesep
    \splitmaxdepth \dp\strutbox \floatingpenalty \@MM
    \hsize\columnwidth \@parboxrestore
    {\rule{\z@}{\footnotesep}\ignorespaces
              % indent
      #1\strut}}}
\long\def\comment#1{}
\newlength{\widebarargwidth}
\newlength{\widebarargheight}
\newlength{\widebarargdepth}
\DeclareRobustCommand{\widebar}[1]{%
  \settowidth{\widebarargwidth}{\ensuremath{#1}}%
  \settoheight{\widebarargheight}{\ensuremath{#1}}%
  \settodepth{\widebarargdepth}{\ensuremath{#1}}%
  \addtolength{\widebarargwidth}{-0.3\widebarargheight}%
  \addtolength{\widebarargwidth}{-0.3\widebarargdepth}%
  \makebox[0pt][l]{\hspace{0.3\widebarargheight}%
    \hspace{0.3\widebarargdepth}%
    \addtolength{\widebarargheight}{0.3ex}%
    \rule[\widebarargheight]{0.95\widebarargwidth}{0.1ex}}%
  {#1}}
\newcommand{\matsnorm}[2]{|\!|\!| #1 | \! | \!|_{{#2}}}
\newcommand{\graph}{\ensuremath{G}}
\newcommand{\edge}{\ensuremath{E}}
\newcommand{\X}{\ensuremath{x}}
\newcommand{\Qmat}{\ensuremath{P}}
\newcommand{\Qmattil}{\ensuremath{\widetilde{\Qmat}}}
\newcommand{\Lmat}{\ensuremath{L}}
\newcommand{\eigval}{\ensuremath{\lambda}}
\newcommand{\Xbar}{\ensuremath{\widebar{\X}}}
\newcommand{\Xmax}{\ensuremath{\X_{\operatorname{max}}}}
\newcommand{\itnum}{\ensuremath{n}}
\newcommand{\ones}{\ensuremath{\vec{1}}}
\newcommand{\pest}[1]{\ensuremath{\theta^{#1}}}
\newcommand{\pstar}{\ensuremath{\theta^*}}
\newcommand{\Uvar}[1]{\ensuremath{Y^{#1}}}
\newcommand{\Zvar}[1]{\ensuremath{Z^{#1}}}
\newcommand{\Xivar}[1]{\ensuremath{\xi^{#1}}}
\newcommand{\Gfun}[1]{\ensuremath{\mathcal{F}^{#1}}}
\newcommand{\Condcov}[1]{\ensuremath{\Sigma^{#1}}}
\newcommand{\Nset}{\ensuremath{N'}}
\newcommand{\Nfull}{\ensuremath{N}}
\newcommand{\Vmat}{\ensuremath{U}}
\newcommand{\Vtil}{\ensuremath{\widetilde{\Vmat}}}
\newcommand{\Dlmat}{\ensuremath{J}}
\newcommand{\Dlmattil}{\ensuremath{\widetilde{\Dlmat}}}
\newcommand{\diag}{\ensuremath{\operatorname{diag}}}
\newcommand{\Adjac}{\ensuremath{A}}
\newcommand{\Degmat}{\ensuremath{D}}
\newcommand{\zeros}{\ensuremath{\vec{0}}}
\newcommand{\Quant}{\ensuremath{Q}}
\newcommand{\aen}{\ensuremath{\operatorname{AEN}}}
\newcommand{\ann}{\ensuremath{\operatorname{ANN}}}
\newcommand{\bcm}{\ensuremath{\operatorname{BC}}}
\newcommand{\bcors}{\begin{corollary}}
\newcommand{\ecors}{\end{corollary}}
\newcommand{\trace}{\ensuremath{\operatorname{trace}}}
\newcommand{\amse}{\ensuremath{\operatorname{AMSE}}}
\newcommand{\nvar}{\ensuremath{\sigma^2}}
\newcommand{\nquant}{\ensuremath{\sigma^2_{\operatorname{qnt}}}}
\newcommand{\Lapgraph}[1]{\ensuremath{L(#1)}}
\newcommand{\Lstar}{\ensuremath{R(\graph)}}
\newcommand{\pestcts}[1]{\ensuremath{\theta_{#1}}}
\newcommand{\satfun}[1]{\ensuremath{C_M(#1)}}
\newcommand{\gammacts}[1]{\ensuremath{\gamma_{#1}}}
\newcommand{\gamstar}{\ensuremath{\gamma^*}}
\newcommand{\trans}{\ensuremath{t}}
\newcommand{\rec}{\ensuremath{r}}
\newcommand{\stepsize}{\epsilon}
\newcommand{\Ring}{\ensuremath{C}}
\newcommand{\Lattice}{\ensuremath{F}}
\newcommand{\Fully}{\ensuremath{K}}
\newcommand{\degmax}{\ensuremath{d}}
\newcommand{\mydefn}{\ensuremath{: \,=}}
\begin{document}

%%%%%%%%%%%%%%%%

\ifthenelse{\equal{\doctype}{TECH}}
% Stat title page
{
% BEGIN COMMENT
%\comment{
\renewcommand{\baselinestretch}{1.05}
% Hack to make renewcommand take effect immediately (if not in preamble,
% only takes effect after a switch in font size)
\small \normalsize

\begin{center}

{\bf{\LARGE{Network-based consensus averaging with general noisy
channels}}}

\vspace*{.2in}

\begin{tabular}{ccc}
Ram Rajagopal$^\ast$ & \hspace*{.2in} & Martin J. Wainwright$^{\ast,\dagger}$ \\
\texttt{ramr@eecs.berkeley.edu} &  & 
\texttt{wainwrig@stat.berkeley.edu}
\end{tabular}

\begin{tabular}{cc}
Department of Statistics$^\dagger$, and \\
Department of Electrical Engineering and Computer Sciences$^\ast$ \\
University of California, Berkeley \\
Berkeley, CA  94720
\end{tabular}

\vspace*{.1in}

Technical Report \\
Department of Statistics, UC Berkeley \\

\end{center}

\vspace{.2cm}

%\maketitle

\begin{abstract}
This paper focuses on the consensus averaging problem on graphs under
general noisy channels.  We study a particular class of distributed
consensus algorithms based on damped updates, and using the ordinary
differential equation method, we prove that the updates converge
almost surely to exact consensus for finite variance noise.  Our
analysis applies to various types of stochastic disturbances,
including errors in parameters, transmission noise, and quantization
noise.  Under a suitable stability condition, we prove that the error
is asymptotically Gaussian, and we show how the asymptotic covariance
is specified by the graph Laplacian.  For additive parameter noise,
we show how the scaling of the asymptotic MSE is controlled by the
spectral gap of the Laplacian.
\end{abstract}

{\bf{Keywords:}} Distributed averaging; sensor networks;
message-passing; consensus protocols; gossip algorithms; stochastic
approximation; graph Laplacian.

}
%%%%%%%%%%%%%%%%%%%%%%%%%%%%%%%%
{

\title{Network-based consensus averaging with general noisy channels}

\author{Ram Rajagopal$^1$, Martin Wainwright$^{1,2}$\\
$^1$ Department of Electrical Engineering and Computer Science \\
$^2$ Department of Statistics \\
University of California, Berkeley\\
\texttt{\small $\{$ramr,wainwrig$\}$@eecs.berkeley.edu} \\
}

\maketitle
}
%%%%%%%%%%%%%%%%%%%%%%%%%%%%%%%%%%%%%%%%%%%%%%%%%%%%%%%%%%%%%%%%%%%%%%%%%%%%%%

\section{Introduction}

Consensus problems, in which a group of nodes want to arrive at a
common decision in a distributed manner, have a lengthy history,
dating back to seminal work from over twenty years
ago~\cite{deGroot74, BorVar82,Tsitsiklis84}.  A particular type of
consensus estimation is the distributed averaging problem, in which a
group of nodes want to compute the average (or more generally, a
linear function) of a set of values.  Due to its applications in
sensor and wireless networking, this distributed averaging problem has
been the focus of substantial recent research.  The distributed
averaging problem can be studied either in
continuous-time~\cite{Olfati07}, or in the discrete-time setting
(e.g.,~\cite{Kempe03,Xiao04,Boyd06,AysCoaRab07,DimSarWai08}).  In both
cases, there is now a fairly good understanding of the conditions
under which various distributed averaging algorithms converge, as well
as the rates of convergence for different graph structures.

The bulk of early work on consensus has focused on the case of perfect
communication between nodes.  Given that noiseless communication may
be an unrealistic assumption for sensor networks, a more recent line
of work has addressed the issue of noisy communication links.  With
imperfect observations, many of the standard consensus protocols might
fail to reach an agreement. Xiao et al.~\cite{Xiao07} observed this
phenomenon, and opted to instead redefine the notion of agreement,
obtaining a protocol that allows nodes to obtain a steady-state
agreement, whereby all nodes are able to track but need not obtain
consensus agreement.  Schizas et al.~\cite{Schizas08} study
distributed algorithms for optimization, including the consensus
averaging problem, and establish stability under noisy updates, in
that the iterates are guaranteed to remain within a ball of the
correct consensus, but do not necessarily achieve exact consensus.
Kashyap et al.~\cite{Kashyap07} study consensus updates with the
additional constraint that the value stored at each node must be
integral, and establish convergence to quantized consensus.  Fagnani
and Zampieri~\cite{FagZam07} study the case of packet-dropping
channels, and propose various updates that are guaranteed to achieve
consensus.  Yildiz and Scaglione~\cite{YilSca07} suggest coding
strategies to deal with quantization noise, but do not establish
convergence.  In related work, Aysal et al~\cite{AysCoaRab07} used
probabilistic forms of quantization to develop algorithms that achieve
consensus in expectation, but not in an almost sure sense.

In the current paper, we address the discrete-time average consensus
problem for general stochastic channels.  Our main contribution is to
propose and analyze simple distributed protocols that are guaranteed
to achieve exact consensus in an almost sure (sample-path) sense.
These exactness guarantees are obtained using protocols with
decreasing step sizes, which smooths out the noise factors.  The
framework described here is based on the classic ordinary differential
equation method~\cite{Ljung77}, and allows for the analysis of several
different and important scenarios, namely:
\begin{itemize}
\item Noisy storage: stored values at each node are corrupted by
noise, with known covariance structure.
\item Noisy transmission:  messages across each edge are
corrupted by noise,  with known covariance structure.
\item Bit constrained channels: dithered quantization is applied to
messages prior to transmission.
\end{itemize}
To the best of our knowledge, this is the first paper to analyze
protocols that can achieve arbitrarily small mean-squared error (MSE)
for distributed averaging with noise.  By using stochastic
approximation theory~\cite{Benveniste90,Kushner97}, we establish
almost sure convergence of the updates, as well as asymptotic
normality of the error under appropriate stability conditions. The
resulting expressions for the asymptotic variance reveal how different
graph structures---ranging from ring graphs at one extreme, to
expander graphs at the other---lead to different variance scaling
behaviors, as determined by the eigenspectrum of the graph
Laplacian~\cite{Chung}.

The remainder of this paper is organized as follows.  We begin in
Section~\ref{SecBackground} by describing the distributed averaging
problem in detail, and defining the class of stochastic algorithms
studied in this paper.  In Section~\ref{SecMain}, we state our main
results on the almost-sure convergence and asymptotic normality of our
protocols, and illustrate some of their consequences for particular
classes of graphs.  In particular, we illustrate the sharpness of our
theoretical predictions by comparing them to simulation results, on
various classes of graphs. Section~\ref{SecProof} is devoted to the
proofs of our main results, and we conclude the paper with discussion
in Section~\ref{SecDiscuss}.  (This work was presented in part at the
Allerton Conference on Control, Computing and Communication in
September 2007.)  \\

\vspace*{.1in}

\noindent {\bf{Comment on notation:}} Throughout this paper, we use
the following standard asymptotic notation: for a functions $f$ and
$g$, the notation $f(n) = \mathcal{O}(g(n))$ means that $f(n) \leq C
g(n)$ for some constant $C < \infty$; the notation $f(n) =
\Omega(g(n))$ means that $f(n) \geq C' g(n)$ for some constant $C' >
0$, and $f(n) = \Theta(g(n))$ means that $f(n) = \mathcal{O}(g(n))$
and $f(n) = \Omega(g(n))$.

\section{Problem set-up}
\label{SecBackground}

In this section, we describe the distributed averaging problem, and
specify the class of stochastic algorithms studied in this paper.

\subsection{Consensus matrices and stochastic updates}

Consider a set of $\nodenum = |\nodeSet|$ nodes, each representing a
particular sensing and processing device.  We model this system as an
undirected graph $\graph = (\nodeSet, \edge)$, with processors
associated with nodes of the graph, and the edge set $\edge \subset
\nodeSet \times \nodeSet$ representing pairs of processors that can
communicate directly.  For each node $\vertex$, we let
$\Nfull(\vertex) \defn \{ u \in \nodeSet \, \mid \, (\vertex, u) \in
\edge \}$ be its neighborhood set.

\begin{figure}[h]
\begin{center}
\psfrag{#th#}{$\pest{}(t)$}
\psfrag{#t3#}{$t$}
\psfrag{#t2#}{$u$}
\psfrag{#t1#}{$s$}
\psfrag{#th1#}{$\pest{}(s)$}
\psfrag{#th2#}{$\pest{}(u)$}
\psfrag{#r#}{$r$}
\psfrag{#Ffun#}{$\Gfun{}(\pest{t}, \Xivar{}(t,r))$}
\widgraph{.34\textwidth}{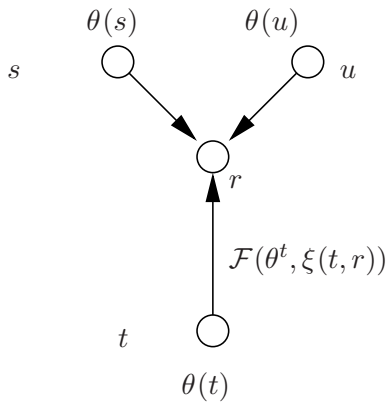}
\caption{Illustration of the distributed protocol.  Each node $t \in V$
maintains an estimate $\pest{}(t)$.  At each round, for a fixed reference
node $r \in V$, each neighbor $t \in N(r)$ sends the message
$\Gfun{}(\pest{t}, \Xivar{}(t,r))$ along the edge $t \rightarrow r$.}
\label{FigNetwork}
\end{center}
\end{figure}
Suppose that each vertex $\vertex$ makes a real-valued measurement
$\X(\vertex)$, and consider the goal of computing the average $\Xbar =
\frac{1}{\nodenum} \sum_{\vertex \in \nodeSet} \X(v)$.  We assume that
$|\X(\vertex)| \leq \Xmax$ for all $\vertex \in \nodeSet$, as dictated
by physical constraints of sensing.  For iterations $\itnum = 0,1,2,
\ldots$, let $\pest{\itnum} = \{\pest{\itnum}(\vertex), \; \vertex \in
\nodeSet \}$ represent an $\nodenum$-dimensional vector of estimates.
Solving the distributed averaging problem amounts to having
$\pest{\itnum}$ converge to $\pstar \defn \Xbar \, \ones$, where
$\ones \in \real^\nodenum$ is the vector of all ones.  Various
algorithms for distributed averaging~\cite{Olfati07,Boyd06} are based
on symmetric consensus matrices $\Lmat \in \real^{\numnode \times
\numnode}$ with the properties:
\begin{subequations}
\label{EqnDefnConsensus}
\begin{eqnarray}
\Lmat(\vertex, \vertex') & \neq & 0 \quad \mbox{only if $(\vertex,
\vertex') \in \edge$} \\
\Lmat \ones & = & \zeros, \qquad \mbox{and} \\
\Lmat & \succeq & 0.
\end{eqnarray}
\end{subequations}
The simplest example of such a matrix is the \emph{graph Laplacian},
defined as follows.  Let $\Adjac \in \real^{\numnode \times \numnode}$
be the adjacency matrix of the graph $\graph$, i.e. the symmetric
matrix with entries
\begin{eqnarray}
\Adjac_{ij} & = & \begin{cases} 1 & \mbox{if $(i,j) \in \edge$} \\
0 & \mbox{otherwise,}
          \end{cases}
\end{eqnarray}
and let $\Degmat = \diag \{d_1, d_2, \ldots, d_\numnode \}$ where $d_i
\defn |\Nfull(i)|$ is the degree of node $i$.  Assuming that the graph
is connected (so that $d_i \geq 1$ for all $i$), the graph Laplacian
is given by
\begin{eqnarray}
\label{EqnGraphLap}
\Lapgraph{\graph} & = & I - \Degmat^{-1/2} \Adjac \Degmat^{-1/2}.
\end{eqnarray}
Our analysis applies to the (rescaled) graph Laplacian, as well as to
various weighted forms of graph Laplacian matrices~\cite{Chung}.

Given a fixed choice of consensus matrix $\Lmat$, we consider the
following family of updates, generating the sequence $\{\pest{\itnum},
\; \itnum = 0, 1,2 \ldots \}$ of $\nodenum$-dimensional vectors.  The
updates are designed to respect the neighborhood structure of the
graph $\graph$, in the sense that at each iteration, the estimate
$\pest{\itnum+1}(\rec)$ at a \emph{receiving node} $r \in \nodeSet$ is
a function of only\footnote{In fact, our analysis is easily
generalized to the case where $\pest{\itnum+1}(\rec)$ depends only on
vertices $\trans \in \Nset(\rec)$, where $\Nset(\rec)$ is a (possibly
random) subset of the full neighborhood set $\Nfull(\vertex)$.
However, to bring our results into sharp focus, we restrict attention
to the case $\Nset(\rec) = \Nfull(\rec)$.}  the estimates
\mbox{$\{\pest{\itnum}(\trans), \; \trans \in \Nfull(\rec) \}$}
associated with \emph{transmitting nodes} $\trans$ in the neighborhood
of node $\rec$.  In order to model noise and uncertainty in the
storage and communication process, we introduce random variables
$\Xivar{}(\trans, \rec)$ associated with the transmission link from
$\trans$ to $\rec$; we allow for the possibility that
$\Xivar{}(\trans, \rec) \neq \Xivar{}(\rec, \trans)$, since the noise
structure might be asymmetric.

With this set-up, we consider algorithms that generate a stochastic
sequence $\{\pest{\itnum}, \itnum = 0, 1, 2, \ldots \}$ in the
following manner:
\vspace*{.1in}

\framebox[.99\textwidth]{\parbox{.95\textwidth}{
\begin{enumerate}
\item[1.] At time step $\itnum = 0$, initialize $\pest{0}(\vertex) =
\X(\vertex)$ for all $\vertex \in \nodeSet$.
\item[2.]  For time steps $\itnum = 0, 1,2, \ldots$, each node $\trans
\in \nodeSet$ computes the random variables
\begin{equation}
\label{EqnDefnUvar}
\Uvar{\itnum+1}(\rec, \trans) =
\begin{cases}
\pest{\itnum}(\trans), & \mbox{if $\trans = \rec$} \\
  \Gfun{}(\pest{\itnum}(\trans),\Xivar{\itnum+1}(\trans,\rec)) &
  \mbox{if $(\trans, \rec) \in \edge$} \\
0 & \mbox{otherwise},
                        \end{cases}
\end{equation}
where $\Gfun{}$ is the \emph{communication-noise function} defining
the model.
\item[3.] Generate estimate $\pest{\itnum+1} \in \real^\nodenum$ as
\begin{equation}
\label{EqnCompactUpdate}
\pest{\itnum+1} \, = \, \pest{\itnum} +   \stepsize_\itnum
\left[-\left(\Lmat \odot \Uvar{\itnum+1} \right) \ones \right],
\end{equation}
where $\odot$ denotes the Hadamard (elementwise) product between
matrices, and $\stepsize_\itnum > 0$ is a decaying step size
parameter.
\end{enumerate}
}}

\vspace*{.2in} 

See Figure~\ref{FigNetwork} for an illustration of the
message-passing update of this protocol.  In this paper, we focus on
step size parameters $\stepsize_\itnum$ that scale as
$\stepsize_\itnum = \Theta(1/\itnum)$.  On an elementwise basis, the
update~\eqref{EqnCompactUpdate} takes the form
\begin{eqnarray*}
\pest{\itnum+1}(\rec) & = & \pest{\itnum}(\rec) - \stepsize_\itnum
\left[ \Lmat(\rec, \rec) \pest{\itnum}(\rec) + \sum_{\trans \in
\Nfull(\rec)} \Lmat(\rec, \trans) \;
\Gfun{}(\pest{\itnum}(\trans),\Xivar{\itnum+1}(\trans,\rec)) \right].
\end{eqnarray*}

\subsection{Communication and noise models}
\label{SecCommModel}

It remains to specify the form of the the function $\Gfun{}$ that
controls the communication and noise model in the local computation
step in equation~\eqref{EqnDefnUvar}. \\

\vspace*{.05in}

\noindent \myparagraph{Noiseless real number model:} The simplest
model, as considered by the bulk of past work on distributed
averaging, assumes noiseless communication of real numbers.  This
model is a special case of the update~\eqref{EqnDefnUvar} with
$\Xivar{\itnum}(\trans, \rec) = 0$, and
\begin{equation}
\Gfun{}(\pest{\itnum}(\trans),\Xivar{\itnum+1}(\trans,\rec)) =
\pest{\itnum}(\trans).
\end{equation}

\vspace*{.02in}

\noindent \myparagraph{Additive edge-based noise model ($\aen$):} In
this model, the term $\Xivar{\itnum}(\trans, \rec)$ is zero-mean
additive random noise variable that is associated with the
transmission $\trans \rightarrow \rec$, and the communication function
takes the form
\begin{equation}
\label{EqnAEN}
\Gfun{}(\pest{\itnum}(\trans),\Xivar{\itnum+1}(\trans, \rec)) =
\pest{\itnum}(\trans) + \Xivar{\itnum+1}(\trans, \rec).
\end{equation}
We assume that the random variables $\Xivar{\itnum+1}(\trans, \rec)$
and $\Xivar{\itnum+1}(\trans', \rec)$ are independent for
distinct edges $(\trans', \rec)$ and $(\trans, \rec)$, and identically
distributed with zero-mean and variance
$\sigma^2 = \var(\Xivar{\itnum+1}(\trans, \rec))$.\\

\vspace*{.02in}

\noindent \myparagraph{Additive node-based noise model ($\ann$):} In
this model, the function $\Gfun{}$ takes the same form~\eqref{EqnAEN}
as the edge-based noise model.  However, the key distinction is that
for each $\vertex' \in \nodeSet$, we assume that
\begin{eqnarray}
\label{EqnANN}
\Xivar{\itnum+1}(\trans, \rec) & = & \Xivar{\itnum+1}(\trans)
\qquad \mbox{for all $\rec \in \Nfull(\trans)$,}
\end{eqnarray}
where $\Xivar{\itnum+1}(\trans)$ is a single noise variable associated
with node $\trans$, with zero mean and variance $\sigma^2 =
\var(\Xivar{\itnum}(\trans))$. Thus, the random variables
$\Xivar{\itnum+1}(\trans, \rec)$ and $\Xivar{\itnum+1}(\trans, \rec')$
are all \emph{identical} for all edges out-going from the transmitting
node $\trans$. \\

\vspace*{.02in}

\noindent \myparagraph{Bit-constrained communication ($\bcm$):}
Suppose that the channel from node $\vertex'$ to $\vertex$ is
bit-constrained, so that one can transmit at most $B$ bits, which is
then subjected to random dithering.  Under these assumptions, the
communication function $\Gfun{}$ takes the form
\begin{eqnarray}
\label{EqnBCM}
\Gfun{}(\pest{}(\vertex'),\Xivar{}(\vertex',\vertex)) & = & \Quant_B
\left(\pest{}(\vertex')+\Xivar{}(\vertex', \vertex) \right),
\end{eqnarray}
where $\Quant_B(\cdot)$ represents the $B$-bit quantization function
with maximum value $M$ and $\Xivar{}(\vertex', \vertex)$ is random
dithering.  We assume that the random dithering is applied prior to
transmission across the channel out-going from vertex $\vertex'$, so
that  $\Xivar{}(\vertex', \vertex) = \Xivar{}(\vertex')$ is the
same random variable across all neighbors $\vertex \in \Nfull(\vertex')$.

\section{Main result and consequences}
\label{SecMain}

In this section, we first state our main result, concerning the
stochastic behavior of sequence $\{\pest{\itnum} \}$ generated by
the updates~\eqref{EqnCompactUpdate}.  We then illustrate its
consequences for the specific communication and noise models
described in Section~\ref{SecCommModel}, and conclude with a
discussion of behavior for specific graph structures.

\subsection{Statement of main result}

Consider the factor $\Lmat \odot \Uvar{}$ that drives the
updates~\eqref{EqnCompactUpdate}.  An important element of our
analysis is the conditional covariance of this update factor, denoted
by $\Condcov{} = \Condcov{}_{\pest{}}$ and given by
\begin{eqnarray}
\label{EqnDefnCondcov}
\Condcov{}_{\pest{}} & \defn &
\expec\left[\left(\Lmat\,\odot\,\Uvar{}(\pest{},
\Zvar{})\right)\,\ones \; \ones^T \, \left(\Lmat\,\odot \,
\Uvar{}(\pest{}, \Zvar{}) \right)^T \; \mid \; \pest{} \right] - \Lmat
\, \pest{} (\Lmat \pest{})^T.
\end{eqnarray}
A little calculation shows that the $(i,j)^{th}$ element of this
matrix is given by
\begin{eqnarray}
\label{EqnDefnInner}
\Condcov{}_{\pest{}}(i,j) & = & \sum_{k,\ell=1}^\numnode \Lmat(i,k)
\Lmat(j,\ell) \; \, \expec\left[\Uvar{}(i,k) \Uvar{}(j,\ell) -
\pest{}(k) \pest{}(\ell) \; \mid \; \pest{} \right].
\end{eqnarray}

Moreover, the eigenstructure of the consensus matrix $\Lmat$ plays an
important role in our analysis.  Since it is symmetric and positive
semidefinite, we can write
\begin{eqnarray}
\label{EqnDefnVmat}
\Lmat & = &  \Vmat \Dlmat  \Vmat^T,
\end{eqnarray}
where $\Vmat$ be an $\numnode \times \numnode$ orthogonal matrix with
columns defined by unit-norm eigenvectors of $\Lmat$, and $\Dlmat
\mydefn \diag \{\eigval_1(\Lmat), \ldots, \eigval_\numnode(\Lmat) \}$ is
a diagonal matrix of eigenvalues, with
\begin{eqnarray}
\label{EqnDefnEigs}
0 \; = \; \eigval_1(\Lmat) < \eigval_2(\Lmat) \leq \ldots <
\eigval_\numnode(\Lmat).
\end{eqnarray}
It is convenient to let $\Vtil$ denote the $\numnode \times (\numnode
-1)$ matrix with columns defined by eigenvectors associated with
positive eigenvalues of $\Lmat$--- that is, excluding column $\Vmat_1
= \ones/\|\ones\|_2$, associated with the zero-eigenvalue
$\eigval_1(\Lmat) = 0$.  With this notation, we have
\begin{eqnarray}
\label{EqnDefnDlmattil}
\Dlmattil \; = \; \diag \{ \eigval_2(\Lmat), \ldots,
\eigval_\numnode(\Lmat) \} & = &  \Vtil^T  \Lmat \Vtil.
\end{eqnarray}

\begin{theorem}
\label{ThmConsist}
Consider the random sequence $\{\pest{\itnum}\}$ generated by the
update~\eqref{EqnCompactUpdate} for some communication function
$\Gfun{}$, consensus matrix $\Lmat$, and step size parameter
$\stepsize_\itnum = \Theta(1/\itnum)$.
\begin{itemize}
\item[(a)] In all cases, the sequence $\{\pest{\itnum}\}$ is a
strongly consistent estimator of $\pstar = \Xbar \ones$, meaning that
$\pest{\itnum} \rightarrow \pstar$ almost surely (a.s.).

\item[(b)] Furthermore, if the second smallest eigenvalue of the
consensus matrix $\Lmat$ satisfies $\eigval_2(\Lmat) > 1/2$ then
\begin{align}
\sqrt{n}(\theta_{n}-\theta^*) \convdist N \left(0, \Vmat^T
\begin{bmatrix} 0 & 0 \\0 & \Qmattil \end{bmatrix} \Vmat \right),
\end{align}
where the $(\numnode-1) \times (\numnode-1)$ matrix $\Qmattil$ is the
solution of the continuous time Lyapunov equation
\begin{eqnarray}
\label{EqnLyap}
\left(\Dlmattil - \frac{I}{2}\right) \Qmattil
+\Qmattil\left(\Dlmattil-\frac{I}{2}\right)^T & = &
\widetilde{\Condcov{}}_{\pstar}
\end{eqnarray}
where $\Dlmattil$ is the diagonal matrix~\eqref{EqnDefnDlmattil}, and
$\widetilde{\Condcov{}}_{\pstar} = \Vtil^T \Condcov{}_{\pstar} \Vtil$
is the transformed version of the conditional
covariance~\eqref{EqnDefnCondcov}.
\end{itemize}
\end{theorem}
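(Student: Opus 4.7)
The plan is to recast the updates in the canonical Robbins-Monro form and apply the ODE method for part (a) and the classical central limit theorem for stochastic approximation for part (b). The first step, common to both parts, is to compute the conditional mean of the driving term. In each of the noise models of Section~\ref{SecCommModel}, a direct calculation (using zero-mean additive noise in the AEN/ANN cases, and the standard dithering identity that makes $\Quant_B$ unbiased conditional on its input in the BCM case) yields
\[
\expec\bigl[(\Lmat \odot \Uvar{n+1})\ones \,\big|\, \pest{n}\bigr] \;=\; \Lmat\,\pest{n}.
\]
Defining the martingale-difference noise $M_{n+1} := (\Lmat \odot \Uvar{n+1})\ones - \Lmat\,\pest{n}$, its conditional covariance is precisely $\Condcov{}_{\pest{n}}$ from~\eqref{EqnDefnCondcov}, and is uniformly bounded since $|\X(v)| \leq \Xmax$ and the noise variances are finite. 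The update then takes the canonical form $\pest{n+1} = \pest{n} - \stepsize_n[\Lmat\,\pest{n} + M_{n+1}]$, with $\stepsize_n = \Theta(1/n)$ satisfying $\sum_n \stepsize_n = \infty$ and $\sum_n \stepsize_n^2 < \infty$.

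For part (a), the mean-field ODE is $\dot{\theta}(t) = -\Lmat\,\theta(t)$. Because $\Lmat$ is symmetric positive semidefinite with a simple zero eigenvalue along $\ones$ (by connectedness of $\graph$), its equilibrium set is the line $\{c\ones : c \in \real\}$, and $V(\theta) := \theta^T \Lmat\,\theta$ serves as a Lyapunov function driving ODE trajectories to this line. A Robbins-Siegmund supermartingale argument applied to $V(\pest{n})$, combined with $\sum_n \stepsize_n^2 < \infty$ and the uniform bound on $\Condcov{}_{\pest{n}}$, gives almost sure boundedness of the iterates. The Kushner-Clark theorem (\cite{Benveniste90,Kushner97}) then delivers a.s.~convergence of $\pest{n}$ to the invariant set $\{c\ones\}$, and checking that the scalar $\ones^T \pest{n}/\numnode$ converges a.s.~(its increments form a martingale with square-summable conditional variances) completes the identification $\pest{n} \to \pstar$ almost surely.

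For part (b), pass to the eigenbasis of $\Lmat$. Writing $\tilde{\theta}_n := \Vtil^T(\pest{n} - \pstar) \in \real^{\numnode - 1}$, the reduced dynamics form the linear SA recursion
\[
\tilde{\theta}_{n+1} \;=\; (I - \stepsize_n \Dlmattil)\,\tilde{\theta}_n \;-\; \stepsize_n\,\Vtil^T M_{n+1},
\]
with Jacobian $\Dlmattil \succ 0$ and driving covariance $\Vtil^T \Condcov{}_{\pest{n}} \Vtil \to \widetilde{\Condcov{}}_{\pstar}$ almost surely, by the consistency from part (a). The classical CLT for Robbins-Monro with step size $1/n$ (due to Fabian; see also Benveniste-Metivier-Priouret Ch.~3, or Kushner-Yin Ch.~10) then gives $\sqrt{n}\,\tilde{\theta}_n \convdist N(0, \Qmattil)$ provided the stability condition $\Dlmattil \succ \frac{1}{2} I$ holds --- precisely the hypothesis $\eigval_2(\Lmat) > 1/2$ --- where $\Qmattil$ is the unique symmetric positive-semidefinite solution of the Lyapunov equation~\eqref{EqnLyap}. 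Rotating back by $\Vmat$ yields the stated asymptotic law, with the zero block reflecting the fact that the invariant direction $\ones$ carries no $1/\sqrt{n}$ fluctuation.

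The main obstacle is the degeneracy introduced by the zero eigenvalue of $\Lmat$: the standard SA CLT assumes a nondegenerate Jacobian at the equilibrium, which fails in the full $\numnode$-dimensional coordinates. The remedy sketched above --- project onto the $(\numnode-1)$-dimensional complement of $\ones$, where $\Dlmattil$ is strictly positive definite --- is standard but requires care in tracking the projected noise covariance and a separate argument along the null direction to show that the $\ones$-component of $\sqrt{n}(\pest{n} - \pstar)$ is asymptotically negligible at scale $1/\sqrt{n}$. A secondary technical point is the uniform control of $\Condcov{}_{\pest{n}}$ across all three noise models; the BCM case in particular needs the dithering lemma to ensure the quantization error is an unbiased perturbation with bounded conditional variance, which then slots into the AEN analysis.
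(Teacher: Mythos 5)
Your overall strategy --- mean-field/ODE analysis for part (a), then projection onto the orthogonal complement of $\ones$ followed by the stochastic-approximation CLT and the Lyapunov equation for part (b) --- is the same as the paper's, and your part (b) matches the paper's argument essentially step for step (the paper likewise invokes Benveniste et al.\ on the reduced $(\numnode-1)$-dimensional recursion with Jacobian $-\Dlmattil$ and rotates back). Two points in part (a), however, are genuine gaps.

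First, the $\bcm$ model. You assert that dithering makes $\Quant_B$ conditionally unbiased, so that $\expec[(\Lmat\odot\Uvar{n+1})\ones\mid\pest{n}]=\Lmat\,\pest{n}$ in all three models. But $\Quant_B$ has bounded range $[-M,M]$, and the dithered quantizer is unbiased only for inputs inside that range; outside it the conditional mean saturates. The correct mean field for $\bcm$ is $h(\pest{})=-\Lmat\,\satfun{\pest{}}$ with the clipping function $\satfun{\cdot}$, and one then needs a separate fixed-point analysis of the saturated ODE (the paper rotates into the eigenbasis, shows any equilibrium must satisfy $\satfun{\pstar}=\alpha\ones$, and uses conservation of the $\ones$-component to force $\alpha=\Xbar\in(-M,M)$) both to get convergence and to justify $\nabla h(\pstar)=-\Lmat$ in part (b). Your proposal skips this entirely; without an a priori guarantee that the iterates plus dither never leave $[-M,M]$, the linear mean field you use is not the right one.

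Second, the null direction. Your martingale argument shows that $\ones^T\pest{n}$ converges a.s., but its limit is $\ones^T\pest{0}$ minus the accumulated increments $\sum_n\stepsize_n\,\ones^T(\Lmat\odot\Uvar{n+1})\ones$, and these increments are not zero: for the edge-noise model $\ones^T(\Lmat\odot\Uvar{})\ones=\sum_{(t,r)\in\edge}\Lmat(r,t)\,\Xivar{}(t,r)$ has strictly positive conditional variance (and similarly for $\ann$). Square-summability gives convergence of this martingale to a \emph{random} limit, so your argument as written identifies the limit only as some random consensus point, not as $\numnode\Xbar$. Closing the claim $\pest{n}\to\pstar$ requires an additional argument pinning down the value along $\ones$; the paper does this by tracking the deterministic ODE trajectory from the initial condition $\pestcts{0}=\X$, for which $\ones^T\pestcts{t}$ is exactly conserved.
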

\bigskip
Theorem~\ref{ThmConsist}(a) asserts that the sequence
$\{\pest{\itnum}\}$ is a strongly consistent estimator of the average.
As opposed to weak consistency, this result guarantees that for almost
any realization of the algorithm, the associated sample path converges
to the exact consensus solution.  Theorem~\ref{ThmConsist}(b)
establishes that for appropriate choices of consensus matrices, the
rate of MSE convergence is of order $1/\itnum$, since the
$\sqrt{\itnum}$-rescaled error converges to a non-degenerate Gaussian
limit.  Such a rate is to be expected in the presence of sufficient
noise, since the number of observations received by any given node
(and hence the inverse variance of estimate) scales as $\itnum$.  The
solution of the Lyapunov equation~\eqref{EqnLyap} specifies the
precise form of this asymptotic covariance, which (as we will see)
depends on the graph structure.

\subsection{Some consequences}

Theorem~\ref{ThmConsist} can be specialized to particular noise and
communication models.  Here we derive some of its consequences for the
$\aen$, $\ann$ and $\bcm$ models.  For any model for which
Theorem~\ref{ThmConsist}(b) holds, we define the average mean-squared
error as
\begin{eqnarray}
\amse(\Lmat; \pstar) & \defn & \frac{1}{\numnode}
\trace(\Qmattil(\pstar)),
\end{eqnarray}
corresponding to asymptotic error variance, averaged over nodes of the
graph.

\bcors[Asymptotic MSE for specific models]
\label{CorAMSE}
Given a consensus matrix $\Lmat$ with second-smallest eigenvalue
\mbox{$\eigval_2(\Lmat) > \frac{1}{2}$}, the sequence
$\{\pest{\itnum}\}$ is a strongly consistent estimator of the average
$\pstar$, with asymptotic MSE characterized as follows:
\begin{enumerate}
\item[(a)] For the additive edge-based noise ($\aen$)
model~\eqref{EqnAEN}:
\begin{eqnarray}
\label{EqnAENmse}
\amse(\Lmat; \pstar) & \leq & \frac{\nvar}{\numnode} \;
\sum_{i=2}^\numnode \left[\frac{\max \limits_{j=1, \ldots, \numnode}
\sum_{k \neq j} \Lmat^2(j,k)}{2 \eigval_i(\Lmat) - 1} \right].
\end{eqnarray}
\item[(b)] For the additive node-based noise ($\ann$)
model~\eqref{EqnANN} and the bit-constrained ($\bcm$)
model~\eqref{EqnBCM}:
\begin{eqnarray}
\label{EqnANNBCmse}
\amse(\Lmat; \pstar) & = & \frac{\sigma^2}{\numnode} \; \;
\sum_{i=2}^\numnode \left[\frac{[\eigval_i(\Lmat)]^2}{2
\eigval_i(\Lmat) - 1} \right],
\end{eqnarray}
where the variance term $\sigma^2$ is given by the quantization noise
$\expec\left[\Quant_B(\pest{}+\Xivar{})^2- \theta^2 \, \mid \,
\pest{}\right]$ for the $\bcm$ model, and the noise variance
$\var(\Xivar{}(\vertex'))$ for the $\ann$ model.
\end{enumerate}
\ecors
\begin{proof}
The essential ingredient controlling the asymptotic MSE is the
conditional covariance matrix $\Condcov{}_{\pstar}$, which specifies
$\Qmattil$ via the Lyapunov equation~\eqref{EqnLyap}.  For analyzing
model $\aen$, it is useful to establish first the
following auxiliary result.  For each $i=1, \ldots, \numnode-1$, we
have
\begin{eqnarray}
\label{EqnDiagInequal}
\Qmattil_{ii} & \leq & \frac{\matsnorm{\Condcov{}_{\pstar}}{2}{}}{ 2
\, \eigval_{i+1}(\Lmat) - 1},
\end{eqnarray}
where $\matsnorm{\Condcov{}_{\pstar}}{2}{} =
\matsnorm{\Condcov{}}{2}{}$ is the spectral norm (maximum eigenvalue
for a positive semdefinite symmetric matrix).  To see this fact, note
that
\begin{equation*}
\Vtil^T \Condcov{} \Vtil \; \preceq \; \Vtil^T
\left[\matsnorm{\Condcov{}}{2}{} I \right] \Vtil \; = \;
\matsnorm{\Condcov{}}{2}{} I.
\end{equation*}
Since $\Qmattil$ satisfies the Lyapunov equation, we have
\begin{eqnarray*}
\left(\Dlmattil - \frac{I}{2}\right) \Qmattil
+\Qmattil\left(\Dlmattil-\frac{I}{2}\right)^T & \preceq &
\matsnorm{\Condcov{}}{2}{} I.
\end{eqnarray*}
Note that the diagonal entries of the matrix $\left(\Dlmattil -
\frac{I}{2}\right) \Qmattil
+\Qmattil\left(\Dlmattil-\frac{I}{2}\right)^T$ are of the form $(2
\eigval_{i+1} - 1) \; \Qmattil_{ii}$.  The difference between the RHS
and LHS matrices constitute a positive semidefinite matrix, which must
have a non-negative diagonal, implying the claimed
inequality~\eqref{EqnDiagInequal}.

In order to use the bound~\eqref{EqnDiagInequal}, it remains to
compute or upper bound the spectral norm $\matsnorm{\Condcov{}}{2}{}$,
which is most easily done using the elementwise
representation~\eqref{EqnDefnInner}. \\

\noindent (a) For the $\aen$ model~\eqref{EqnAEN}, we have
\begin{eqnarray}
\expec\left[\Uvar{}(i,k)\Uvar{}(j,\ell) - \pest{}_k \pest{}_\ell \,
  \mid  \, \pest{}\right] & = & \expec\left[\Xivar{}(i,k)
  \Xivar{}(j,\ell)\right].
\end{eqnarray}
Since we have assumed that the random variables $\Xivar{}(i,k)$ on
each edge $(i,k)$ are i.i.d., with zero-mean and variance $\nvar$, we
have
\begin{eqnarray*}
\expec\left[\Uvar{}(i,k)\Uvar{}(j,\ell) - \pest{}(k) \pest{}(\ell) \,
  \mid \, \pest{}\right] & = & \begin{cases} \nvar & \mbox{if $(i,k) =
  (j, \ell)$ and $i \neq j$} \\
0 & \mbox{otherwise.}
                   \end{cases}
\end{eqnarray*}
Consequently, from the elementwise expression~\eqref{EqnDefnInner}, we conclude
that $\Condcov{}$ is diagonal, with entries
\begin{eqnarray*}
\Condcov{}(j,j) & = & \nvar \; \sum_{k \neq j} \Lmat^2(k,j),
\end{eqnarray*}
so that $\matsnorm{\Condcov{}}{2}{} = \nvar \, \max_{j =1, \ldots,
\numnode} \sum_{k \neq j} \Lmat^2_{jk}$, which establishes the
claim~\eqref{EqnAENmse}.  \\

\noindent (b) For the $\bcm$ model~\eqref{EqnBCM}, we have
\begin{eqnarray}
\expec\left[\Uvar{}(i,k)\Uvar{}(j,\ell) - \pest{}(k) \pest{}(\ell) \,
  \mid \, \pest{}\right] & = & \begin{cases} \nquant & \mbox{if $i =j$
  and $k = \ell$} \\
0  & \mbox{otherwise},
                   \end{cases}
\end{eqnarray}
where $\nquant \defn \expec\left[\Quant_B(\pest{}+\Xivar{})^2-
\theta^2 \, \mid \, \pest{}\right]$ is the quantization noise.
Therefore, we have $\Condcov{}(\pstar) = \nquant \Lmat^2$, and using
the fact that $\Vtil$ consists of eigenvectors of $\Lmat$ (and hence
also $\Lmat^2$, the Lyapunov equation~\eqref{EqnLyap} takes the form
\begin{eqnarray*}
\left(\Dlmattil - \frac{I}{2}\right) \Qmattil
+\Qmattil\left(\Dlmattil-\frac{I}{2}\right)^T & = & \nquant \,
(\Dlmattil)^2,
\end{eqnarray*}
which has the explicit diagonal solution $\Qmattil$ with entries
$\Qmattil_{ii} = \frac{\nquant \, \eigval_{i+1}^2(\Lmat)}{2
\eigval_{i+1}(\Lmat) - 1}$.  Computing the asymptotic MSE
$\frac{1}{\numnode} \sum_{i=1}^{\numnode-1} \Qmattil_{ii}$ yields the
claim~\eqref{EqnANNBCmse}.  The proof of the same claim for the $\ann$
model is analogous.
\end{proof}

\subsection{Scaling behavior for specific graph classes}

We can obtain further insight by considering Corollary~\ref{CorAMSE}
for specific graphs, and particular choices of consensus matrices
$\Lmat$.  For a fixed graph $\graph$, consider the graph Laplacian
$\Lapgraph{\graph}$ defined in equation~\eqref{EqnGraphLap}.  It is
easy to see that $\Lapgraph{\graph}$ is always positive semi-definite,
with minimal eigenvalue $\eigval_1(\Lapgraph{\graph}) = 0$,
corresponding to the constant vector.  For a connected graph, the
second smallest eigenvector $\Lapgraph{\graph}$ is strictly
positive~\cite{Chung}.  Therefore, given an undirected graph $\graph$
that is connected, the most straightforward manner in which to obtain
a consensus matrix $\Lmat$ satisfying the conditions of
Corollary~\ref{CorAMSE} is to rescale the graph Laplacian
$\Lapgraph{\graph}$, as defined in equation~\eqref{EqnGraphLap}, by
its second smallest eigenvalue $\eigval_2(\Lapgraph{\graph})$, thereby
forming the rescaled consensus matrix
\begin{eqnarray}
\label{EqnDefnLstar} 
\Lstar &  \defn & \frac{1}{\eigval_2(\Lapgraph{\graph})} \;
\Lapgraph{\graph}.
\end{eqnarray}
with $\eigval_2(\Lstar) = 1 > \frac{1}{2}$. 

 With this choice of consensus matrix, let us consider the
implications of Corollary~\ref{CorAMSE}(b), in application to the
additive node-based noise ($\ann$) model, for various graphs.  We
begin with a simple lemma, proved in Appendix~\ref{AppLemOrder},
showing that, up to constants, the scaling behavior of the asymptotic
MSE is controlled by the second smallest eigenvalue
$\eigval_2(\Lapgraph{\graph})$.
\begin{lemma}
\label{LemOrder}
For any connected graph $\graph$, using the rescaled Laplacian
consensus matrix~\eqref{EqnDefnLstar}, the asymptotic MSE for the
$\ann$ model~\eqref{EqnANN} satisfies the bounds
\begin{equation}
\frac{\sigma^2}{2 \eigval_2(\Lapgraph{\graph})} \; \leq \;
\amse(\Lstar; \pstar) \; \leq \;
\frac{\sigma^2}{\eigval_2(\Lapgraph{\graph})},
\end{equation}
where $\eigval_2(\Lapgraph{\graph})$ is the second smallest
eigenvalue of the graph.
\end{lemma}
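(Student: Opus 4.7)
The plan is to start from the explicit formula for $\amse(\Lmat;\pstar)$ in Corollary~\ref{CorAMSE}(b), substitute the rescaled eigenvalues of $\Lstar$, and then sandwich the resulting sum using a simple scalar inequality together with the trace of the normalized Laplacian.

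First I would write $\mu_i \defn \eigval_i(\Lapgraph{\graph})$ and $\mu \defn \mu_2$, so that the eigenvalues of $\Lstar = \Lapgraph{\graph}/\mu$ are $\eigval_i(\Lstar) = \mu_i/\mu$. In particular $\eigval_2(\Lstar) = 1 > 1/2$, so Corollary~\ref{CorAMSE}(b) applies and, after simplification, yields
\begin{equation*}
\amse(\Lstar;\pstar) \; = \; \frac{\sigma^2}{\numnode\,\mu}\sum_{i=2}^{\numnode}\frac{\mu_i^2}{2\mu_i-\mu}.
\end{equation*}

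Second, I would isolate the elementary inequality: for all $x \geq \mu > 0$,
\begin{equation*}
\frac{x}{2} \; \leq \; \frac{x^2}{2x-\mu} \; \leq \; x,
\end{equation*}
whose upper bound rearranges to $x \geq \mu$ and whose lower bound rearranges to $\mu x \geq 0$. Since $\mu_i \geq \mu$ for all $i \geq 2$, applying this pointwise and summing gives
\begin{equation*}
\frac{\sigma^2}{2\numnode\,\mu}\sum_{i=2}^{\numnode}\mu_i \; \leq \; \amse(\Lstar;\pstar) \; \leq \; \frac{\sigma^2}{\numnode\,\mu}\sum_{i=2}^{\numnode}\mu_i.
\end{equation*}

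Third, I would close out the argument by observing that the trace of the normalized Laplacian equals $\numnode$: since $\Lapgraph{\graph} = I - \Degmat^{-1/2}\Adjac\Degmat^{-1/2}$ and $\Adjac$ has zero diagonal, $\trace(\Lapgraph{\graph}) = \numnode$, hence $\sum_{i=2}^{\numnode}\mu_i = \numnode$ (recalling $\mu_1=0$). Plugging this in collapses both bounds to $\sigma^2/(2\mu)$ and $\sigma^2/\mu$, which is the claim. There is no real obstacle here --- the entire content is the pointwise scalar bound on $x \mapsto x^2/(2x-\mu)$ combined with the trace identity; the value of the lemma is the clean conclusion that $\amse(\Lstar;\pstar) = \Theta(\sigma^2/\eigval_2(\Lapgraph{\graph}))$, pinning the scaling to the spectral gap regardless of the full eigenspectrum.
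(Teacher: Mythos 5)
Your proof is correct and is essentially the paper's own argument: both reduce to the pointwise bound $x/2 \leq x^2/(2x-\mu) \leq x$ for $x \geq \mu > 0$ applied to the eigenvalues in the Corollary~\ref{CorAMSE}(b) formula, combined with $\trace(\Lapgraph{\graph}) = \numnode$. Your presentation as a single two-sided sandwich is marginally cleaner than the paper's two separate calculations, but the content is identical.
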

Combined with known results from spectral graph theory~\cite{Chung},
Lemma~\ref{LemOrder} allows us to make specific predictions about the
number of iterations required, for a given graph topology of a given
size $\numnode$, to reduce the asymptotic MSE to any $\delta > 0$:
in particular, the required number of iterations scales as
\begin{eqnarray}
\label{EqnKeyScaling}
\itnum & = & \Theta
\left(\frac{\sigma^2}{\eigval_2(\Lapgraph{\graph})} \;
\frac{1}{\delta} \right).
\end{eqnarray}
Note that this scaling is similar but different from the scaling of
noiseless updates~\cite{Boyd06,DimSarWai08}, where the MSE is (with
high probability) upper bounded by $\delta$ for $\itnum =
\Theta(\frac{\log(1/\delta)}{-\log(1-\eigval_2(\Lapgraph{\graph}))})$,
which scales as
\begin{eqnarray}
\itnum & = & \Theta \left(\frac{\log
(1/\delta)}{\eigval_2(\Lapgraph{\graph})} \right),
\end{eqnarray}
for a decaying spectral gap $\eigval_2(\Lapgraph{\graph}) \rightarrow
0$.

\subsection{Illustrative simulations}
We illustrate the predicted scaling~\eqref{EqnKeyScaling} by some
simulations on different classes of graphs.  For all experiments
reported here, we set the step size parameter $\stepsize_\itnum =
\frac{1}{\itnum + 100}$.  The additive offset serves to ensure
stability of the updates in very early rounds, due to the possibly
large gain specified by the rescaled
Laplacian~\eqref{EqnDefnLstar}. We performed experiments for a range
of graph sizes, for the additive node noise ($\ann$)
model~\eqref{EqnANN}, with noise variance $\sigma^2 = 0.1$ in all
cases. For each graph size $\nodenum$, we measured the number of
iterations $\itnum$ required to reach a fixed level $\delta$ of
mean-squared error.  
\subsubsection{Cycle graph} 

Consider the ring graph $\Ring_\numnode$ on $\numnode$ vertices, as
illustrated in Figure~\ref{FigRing}(a).  Panel (b) provides a log-log
plot of the MSE versus the iteration number $\itnum$; each trace
corresponds to a particular sample path.  Notice how the MSE over each
sample coverges to zero.  Moreover, since Theorem~\ref{ThmConsist}
predicts that the MSE should drop off as $1/\itnum$, the linear rate
shown in this log-log plot is consistent.  Figure~\ref{FigRing}(c)
plots the number of iterations (vertical axis) required to achieve a
given constant MSE versus the size of the ring graph (horizontal
axis).  For the ring graph, it can be shown (see Chung~\cite{Chung})
that the second smallest eigenvalue scales as
$\eigval_2(\Lapgraph{\Ring_\numnode}) = \Theta(1/\numnode^2)$, which
implies that the number of iterations to achieve a fixed MSE for a
ring graph with $\numnode$ vertices should scale as $\itnum =
\Theta(\numnode^2)$.  Consistent with this prediction, the plot in
Figure~\ref{FigRing}(c) shows a quadratic scaling; in particular, note
the excellent agreement between the theoretical prediction and the
data.

\newcommand{\figwone}{0.2\textwidth}
\newcommand{\figwtwo}{0.38\textwidth}
\begin{figure}[h]
\begin{center}
\begin{tabular}{ccc}
\raisebox{.2in}{\widgraph{\figwone}{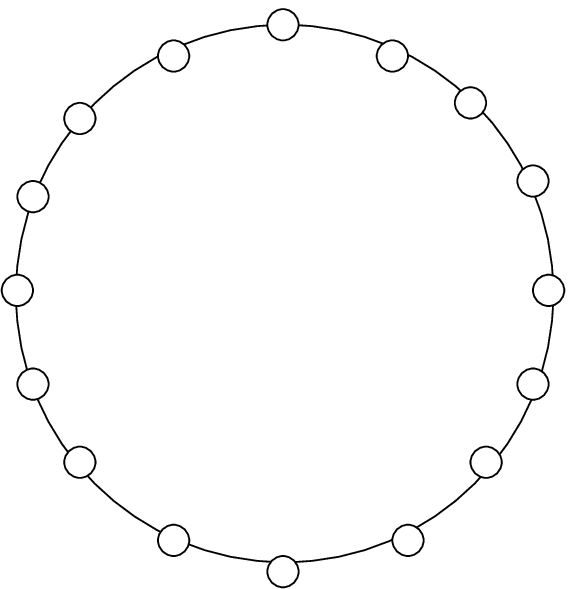}} & 
\widgraph{\figwtwo}{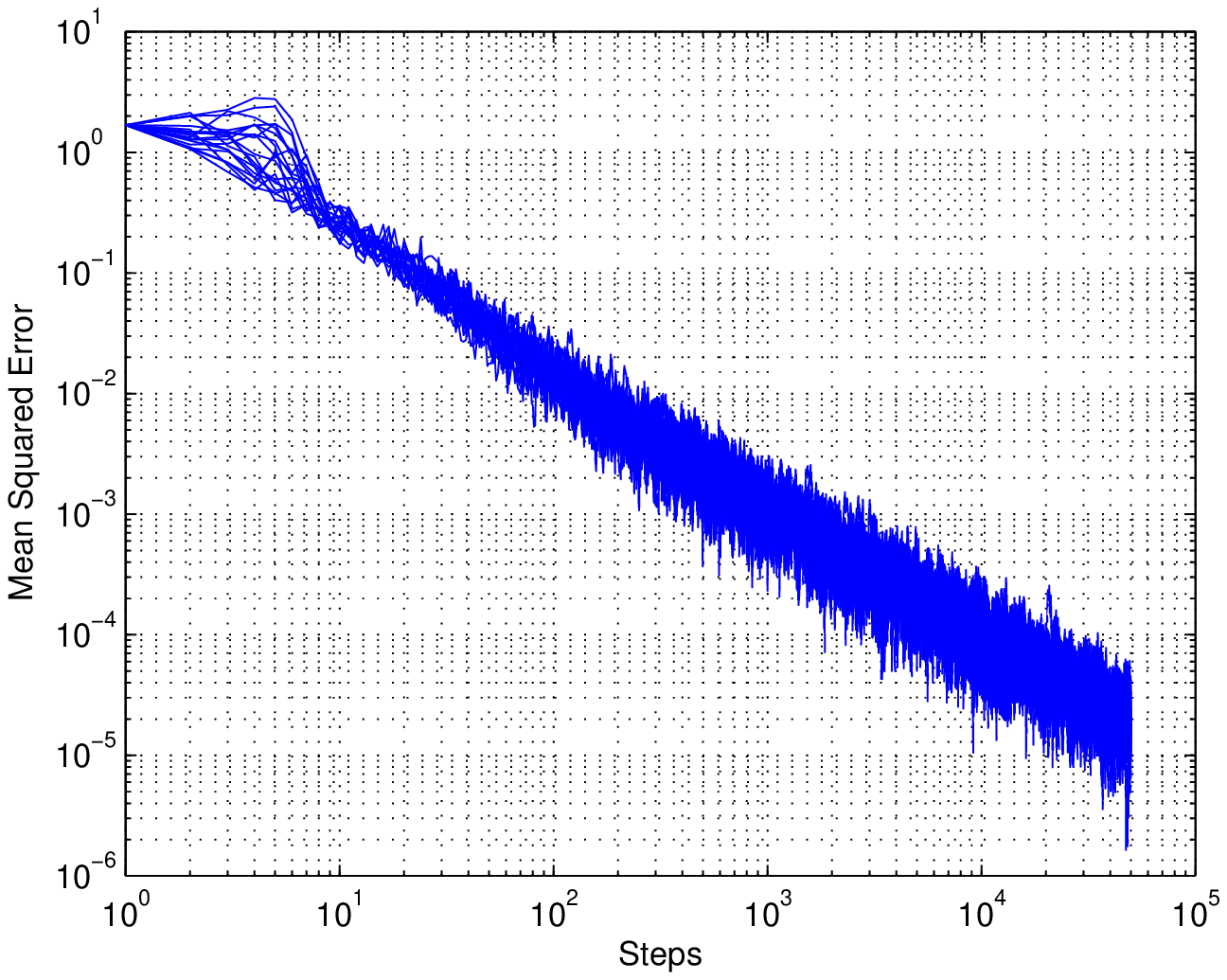} &
\widgraph{\figwtwo}{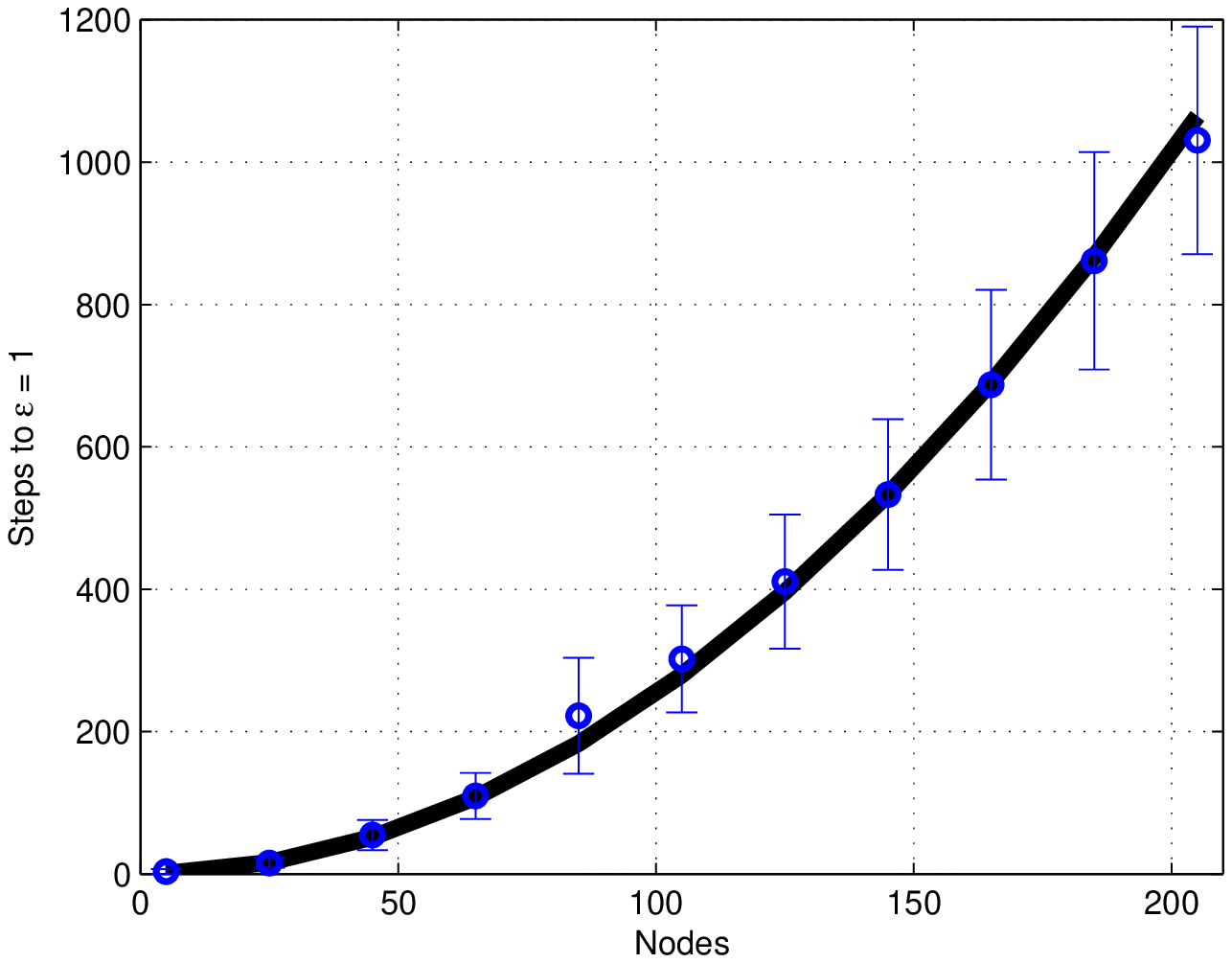} \\
(a) & (b) & (c)
\end{tabular}
\caption{Comparison of empirical simulations to theoretical
predictions for the ring graph in panel (a).  (b) Sample path plots of
log MSE versus log iteration number: as predicted by the theory, the
log MSE scales linearly with log iterations.  (c) Plot of number of
iterations (vertical axis) required to reach a fixed level of MSE
versus the graph size (horizontal axis).  For the ring graph, this
quantity scales quadratically in the graph size, consistent with
Corollary~\ref{CorAMSE}.}
\label{FigRing}
\end{center}
\end{figure}

\subsubsection{Lattice model}

Figure~\ref{FigGrid}(a) shows the two-dimensional four
nearest-neighbor lattice graph with $\numnode$ vertices, denoted
$\Lattice_\numnode$.  Again, panel (b) corresponds to a log-log plot
of the MSE versus the iteration number $\itnum$, with each trace
corresponding to a particular sample path, again showing a linear rate
of convergence to zero.  Panel (c) shows the number of iterations
required to achieve a constant MSE as a function of the graph size.
For the lattice, it is known~\cite{Chung} that
$\eigval_2(\Lmat(\Lattice_\numnode)) = \Theta(1/\numnode)$, which
implies that the critical number of iterations should scale as $\itnum
= \Theta(\numnode)$.  Note that panel (c) shows linear scaling, again
consistent with the theory.

\begin{figure}[h]
\begin{center}
\begin{tabular}{ccc}
\raisebox{.2in}{\widgraph{\figwone}{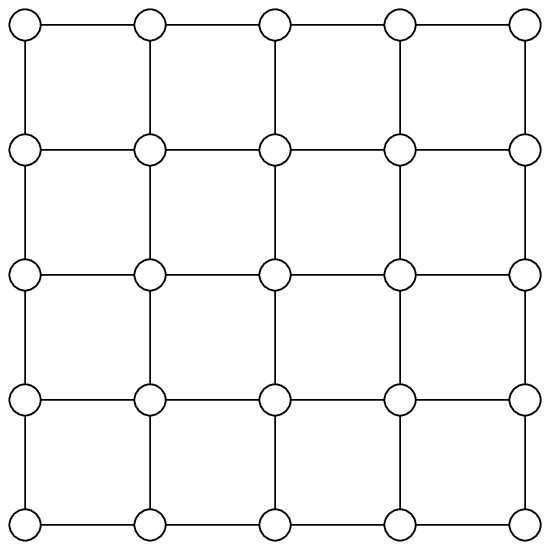}} & 
\widgraph{\figwtwo}{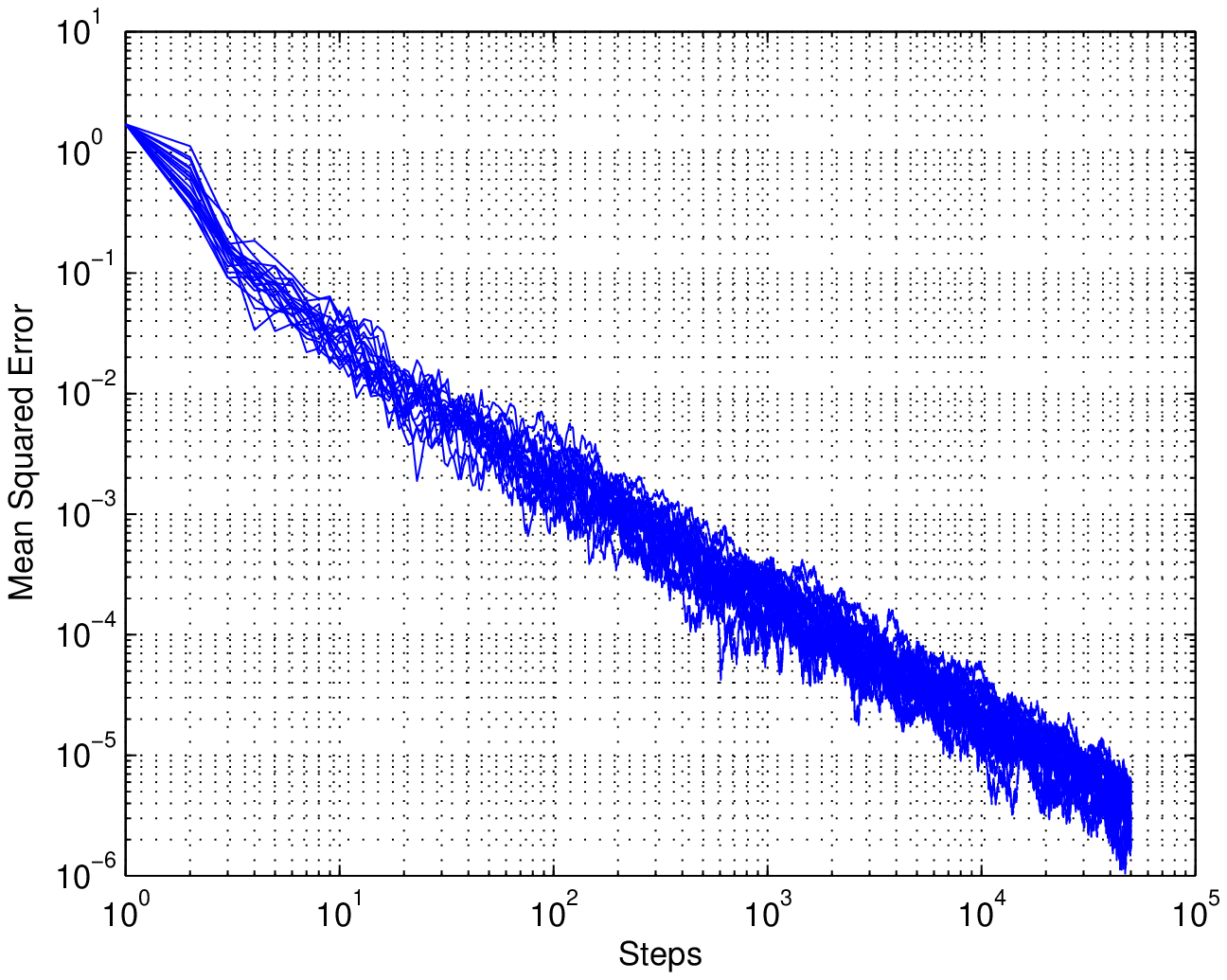} &
\widgraph{\figwtwo}{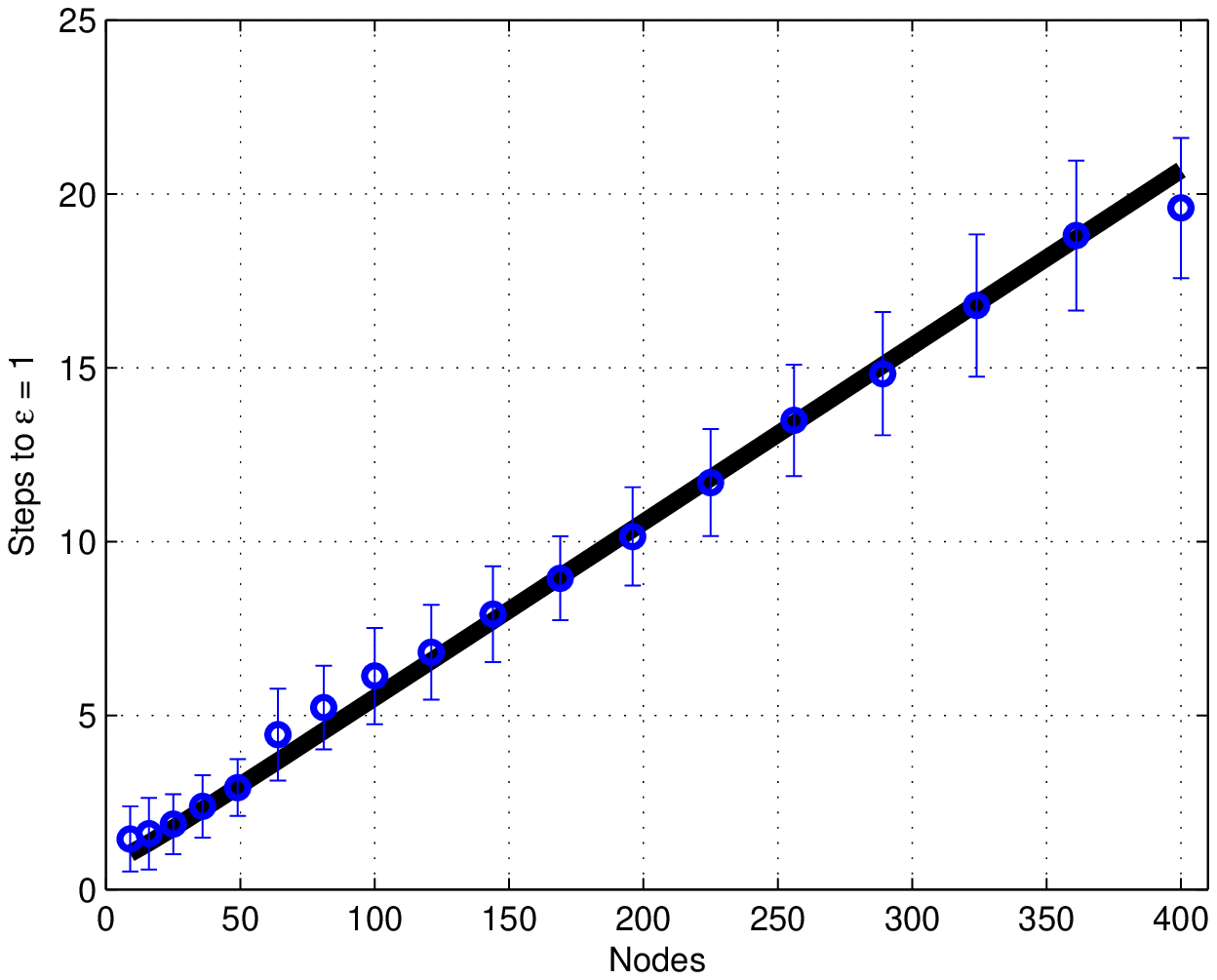} \\
(a) & (b) & (c)
\end{tabular}
\caption{Comparison of empirical simulations to theoretical
predictions for the four nearest-neighbor lattice (panel (a)).  (b)
Sample path plots of log MSE versus log iteration number: as predicted
by the theory, the log MSE scales linearly with log iterations.  (c)
Plot of number of iterations (vertical axis) required to reach a fixed
level of MSE versus the graph size (horizontal axis).  For the
lattice, graph, this quantity scales linearly in the graph size,
consistent with Corollary~\ref{CorAMSE}.}
\label{FigGrid}
\end{center}
\end{figure}

\subsubsection{Expander graphs}

Consider a bipartite graph $\graph = (V_1, V_2, \edge)$, with
$\numnode = |V_1| + |V_2|$ vertices and edges joining only vertices in
$V_1$ to those in $V_2$, and constant degree $\degmax$; see
Figure~\ref{FigExpander}(a) for an illustration with $\degmax = 3$.  A
bipartite graph of this form is an expander~\cite{Alo86,Alon,Chung}
with parameters $\alpha, \delta \in (0,1)$, if for all subsets $S
\subset V_1$ of size $|S| \leq \alpha |V_1|$, the neighborhood set of
$S$---namely, the subset
\begin{eqnarray*}
N(S) & \defn & \{t \in V_2 \; \mid \: (s,t) \quad \mbox{for some $s \in S$}
\},
\end{eqnarray*}
has cardinality $|N(S)| \geq \delta \degmax |S|$.  Intuitively, this
property guarantees that each subset of $V_1$, up to some critical
size, ``expands'' to a relatively large number of neighbors in $V_2$.
(Note that the maximum size of $|N(S)|$ is $\degmax |S|$, so that
$\delta$ close to $1$ guarantees that the neighborhood size is close
to its maximum, for all possible subsets $S$.)  Expander graphs have a
number of interesting theoretical properties, including the property
that $\eigval_2(\Lapgraph{\Fully_\numnode}) = \Theta(1)$---that is, a
bounded spectral gap~\cite{Alo86,Chung}.

In order to investigate the behavior of our algorithm for expanders,
we construct a random bi-partite graph as follows: for an even number
of nodes $\numnode$, we split them into two subsets $V_i, i=1,2$, each
of size $\numnode/2$.  We then fix a degree $\degmax$, construct a
random matching on $\degmax \frac{\numnode}{2}$ nodes, and use it
connect the vertices in $V_1$ to those in $V_2$.  This procedure forms
a random bipartite $\degmax$-regular graph; using the probabilistic
method, it can be shown to be an edge-expander with with probability
$1-o(1)$, as the graph size tends to infinity~\cite{Alo86,Feldman05b}.

\begin{figure}[h]
\begin{center}
\begin{tabular}{ccc}
\raisebox{.05in}{\widgraph{.10\textwidth}{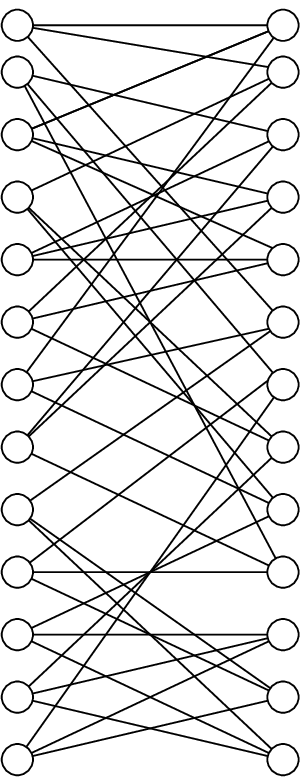}} & 
\widgraph{\figwtwo}{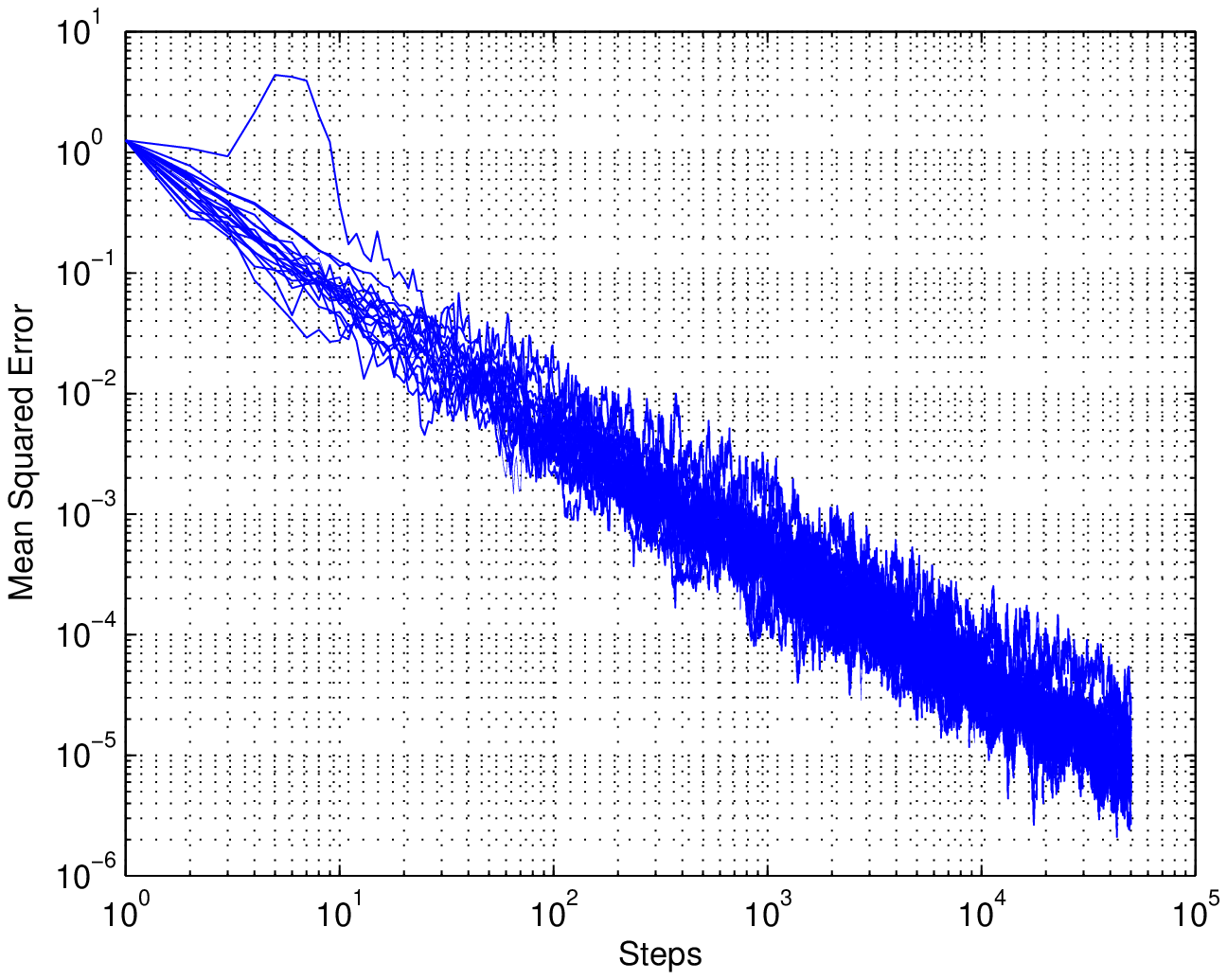} &
\widgraph{\figwtwo}{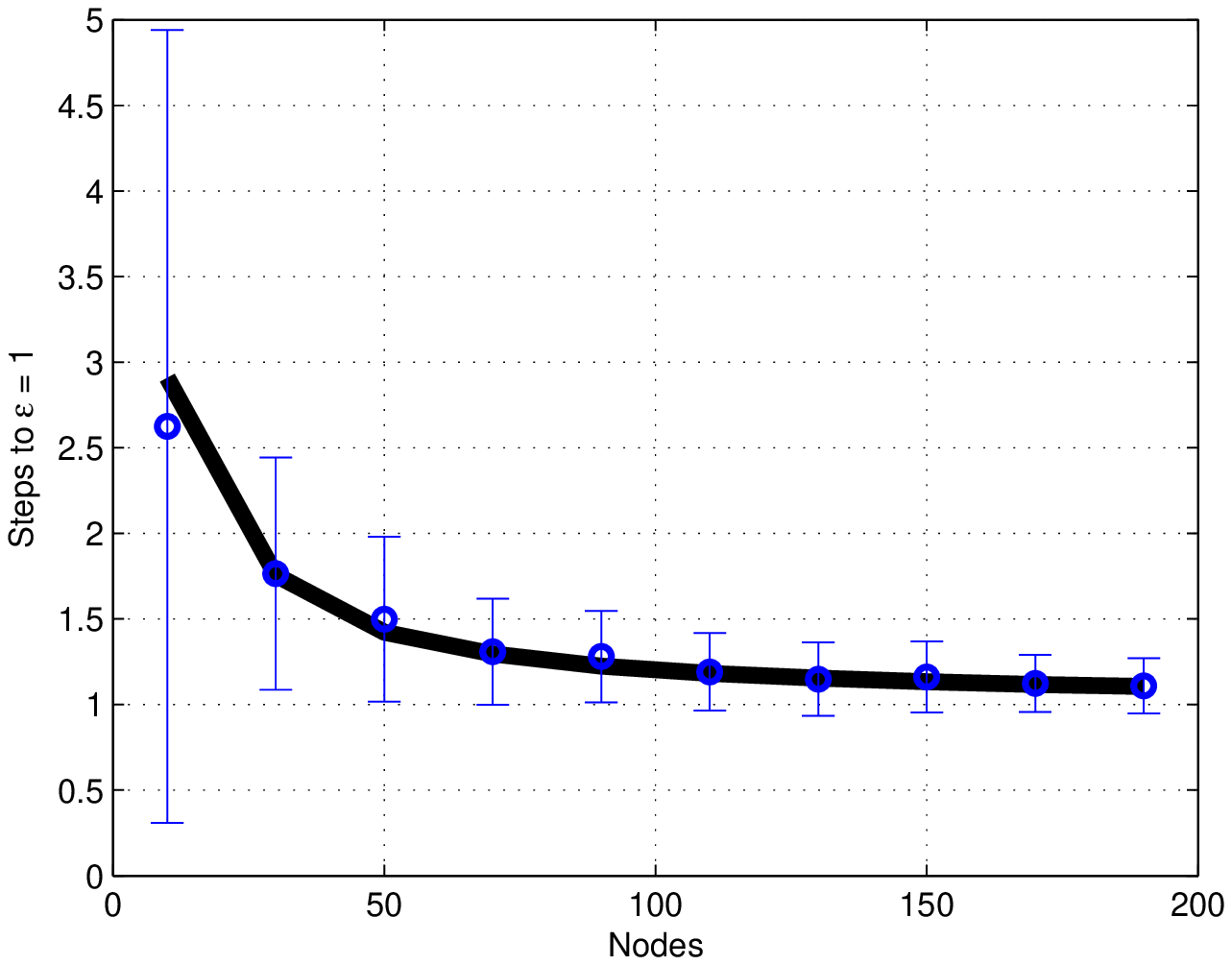} \\
(a) & (b) & (c)
\end{tabular}
\caption{Comparison of empirical simulations to theoretical
predictions for the bipartite expander graph in panel (a).  (b) Sample
path plots of log MSE versus log iteration number: as predicted by the
theory, the log MSE scales linearly with log iterations.  (c) Plot of
number of iterations (vertical axis) required to reach a fixed level
of MSE versus the graph size (horizontal axis).  For an expander, this
quantity remains essentially constant with the graph size, consistent
with Corollary~\ref{CorAMSE}.}
\label{FigExpander}
\end{center}
\end{figure}

Given the constant spectral gap $\eigval_2(\Lapgraph{\Fully_\numnode})
= \Theta(1)$, the scaling in number of iterations to achieve constant
MSE is $\itnum = \Theta(1)$.  This theoretical prediction is compared
to simulation results in Figure~\ref{FigExpander}; note how the number
of iterations soon settles down to a constant, as predicted by the
theory.

\section{Proof of Theorem~\ref{ThmConsist}}
\label{SecProof}

We now turn to the proof of Theorem~\ref{ThmConsist}.  The basic idea
is to relate the behavior of the stochastic
recursion~\eqref{EqnCompactUpdate} to an ordinary differential
equation (ODE), and then use the ODE method~\cite{Ljung77} to analyze
its properties.  The ODE involves a function $t \mapsto \pestcts{t}
\in \real^\numnode$, with its specific structure depending on the
communication and noise model under consideration.  For the $\aen$ and
$\ann$ models, the relevant ODE is given by
\begin{eqnarray}
\label{EqnODESimple}
\frac{d \pestcts{t}}{dt} & = & -\Lmat \pestcts{t}.
\end{eqnarray}
For the $\bcm$ model, the approximating ODE is given by
\begin{equation}
\label{EqnODESat}
\frac{d \pestcts{t}}{dt}  =  -\Lmat \; \satfun{\pestcts{t}} \qquad \mbox{with }
\satfun{u}  \, \defn \,  \begin{cases}  u & \mbox{if $|u| < M$} \\
-M & \mbox{if $u \leq -M$} \\
+M & \mbox{if $u \geq +M$.}
              \end{cases}
\end{equation}
In both cases, the ODE must satisfy the initial condition
$\pestcts{0}(\vertex) = \X(\vertex)$.

\subsection{Proof of Theorem~\ref{ThmConsist}(a)}

The following result connects the discrete-time stochastic process
$\{\pest{\itnum} \}$ to the deterministic ODE solution, and establishes
Theorem~\ref{ThmConsist}(a):
\begin{lemma}
\label{LemODEs}
The ODEs~\eqref{EqnODESimple} and~\eqref{EqnODESat} each have $\pstar
= \Xbar \ones$ as their unique stable fixed point.  Moreover, for all
$\delta > 0$, we have
\begin{eqnarray}
\label{EqnStrongPath}
\prob\left(\limsup_{n\rightarrow \infty}\|\theta_n - \theta_{t_n}\|>
\delta\right) & = & 0, \qquad \mbox{for $t_n = \sum_{k = 1}^{n}
\frac{1}{k}$},
\end{eqnarray}
which implies that $\pest{\itnum} \rightarrow \pstar$ almost surely.
\end{lemma}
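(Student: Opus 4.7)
The proof is a direct application of the classical ODE method of stochastic approximation~\cite{Kushner97}, which I would organize in two stages corresponding to the two halves of the lemma.

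\emph{ODE analysis.} For the linear ODE~\eqref{EqnODESimple}, I would use the spectral decomposition $\Lmat = \Vmat \Dlmat \Vmat^T$ with $0 = \eigval_1 < \eigval_2 \leq \cdots \leq \eigval_\numnode$ and $\Vmat_1 = \ones/\sqrt{\numnode}$ to write $\pestcts{t} = \Vmat e^{-\Dlmat t}\Vmat^T \X$; as $t \to \infty$ this converges to $\Vmat_1 \Vmat_1^T \X = \Xbar\ones = \pstar$, since all other eigenvalues are strictly positive. The identity $\ones^T \Lmat = 0$ implies that the affine hyperplane $\{\theta \in \real^\numnode : \ones^T \theta = \numnode \Xbar\}$ is forward-invariant and contains the initial condition $\X$; restricted to this hyperplane, the Lyapunov function $V(\theta) = \|\theta-\pstar\|_2^2$ satisfies $\dot V \leq -2\eigval_2(\Lmat)\, V$, so $\pstar$ is the unique, globally asymptotically stable equilibrium. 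For the saturated ODE~\eqref{EqnODESat}, I would take $M \geq \Xmax$ and verify that $[-M,M]^\numnode$ is forward-invariant: at any face where $\pestcts{r} = M$, the diagonal dominance $\Lmat(r,r) = \sum_{t \neq r}|\Lmat(r,t)|$ (valid for Laplacian-type $\Lmat$ with non-positive off-diagonals) together with $|\satfun{\pestcts{t}}| \leq M$ forces the $r$th component of $-\Lmat \satfun{\pestcts{}}$ to point inward, so saturation never activates and the linear analysis carries over.

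\emph{Stochastic tracking.} I would recast~\eqref{EqnCompactUpdate} in standard Robbins--Monro form
\begin{equation*}
\pest{\itnum+1} \;=\; \pest{\itnum} + \stepsize_\itnum \bigl[ h(\pest{\itnum}) + M_{\itnum+1} \bigr],
\end{equation*}
where $h(\theta) = -\expec[(\Lmat\odot\Uvar{\itnum+1})\ones \mid \pest{\itnum}=\theta]$ coincides with $-\Lmat\theta$ for the $\aen$ and $\ann$ models (by direct calculation using $\expec[\xi]=0$) and with $-\Lmat\satfun{\theta}$ for the $\bcm$ model (using unbiasedness of the dithered quantizer), and $M_{\itnum+1}$ is a martingale difference with respect to the natural filtration. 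I would then invoke a Kushner--Clark-type convergence theorem~\cite{Kushner97}, whose hypotheses follow routinely: (i) $\sum_\itnum \stepsize_\itnum = \infty$ and $\sum_\itnum \stepsize_\itnum^2 < \infty$ from $\stepsize_\itnum = \Theta(1/\itnum)$; (ii) global Lipschitz continuity of $h$; (iii) the conditional second-moment bound $\expec[\|M_{\itnum+1}\|^2 \mid \pest{\itnum}] \leq K(1 + \|\pest{\itnum}\|^2)$, from the finite-variance assumptions on $\xi$ and the boundedness of the quantizer; and (iv) almost-sure boundedness of the iterates. Condition (iv) is immediate for the $\bcm$ model by forward-invariance of $[-M,M]^\numnode$ under the iteration for sufficiently small $\stepsize_\itnum$; in the $\aen$/$\ann$ case I would decompose $\pest{\itnum}$ along $\ones$ and $\ones^\perp$, use the contraction $\|(I-\stepsize_\itnum \Lmat)\, y\|_2 \leq (1-\stepsize_\itnum \eigval_2(\Lmat))\|y\|_2$ on $\ones^\perp$ to control the orthogonal part, and apply $L^2$-martingale convergence (using $\sum_\itnum \stepsize_\itnum^2 < \infty$) to the $\ones$-component. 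The theorem then yields the tracking estimate~\eqref{EqnStrongPath}, and combining with $t_n = \sum_{k=1}^n k^{-1} \uparrow \infty$ and $\pestcts{t} \to \pstar$ along the ODE, a triangle inequality gives $\pest{\itnum} \to \pstar$ almost surely.

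\emph{Main obstacle.} The central subtlety is that $\pstar$ is not an isolated fixed point of the ODE: because $\Lmat$ has a one-dimensional kernel $\text{span}(\ones)$, there is an entire line of equilibria, and the standard ODE-method theorem must be applied on the invariant manifold $\{\theta : \ones^T \theta = \numnode \Xbar\}$. The conservation $\ones^T h(\theta) = 0$ ensures that the deterministic drift preserves this manifold, while the projections of $M_{\itnum+1}$ onto $\ones$ must be carefully controlled to ensure that the stochastic iterates do not drift appreciably away from it; the interplay between this degenerate direction of the ODE and the transverse noise is precisely what forces a tracking-style argument rather than a direct Lyapunov calculation on the iterates themselves.
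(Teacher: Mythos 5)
Your overall strategy is the same as the paper's: identify the mean ODE ($-\Lmat\pestcts{t}$ for the $\aen$/$\ann$ models, $-\Lmat\,\satfun{\pestcts{t}}$ for $\bcm$), show that $\pstar = \Xbar\ones$ is the relevant stable equilibrium, and invoke a Kushner--Yin tracking theorem to transfer this to the iterates. The one place you genuinely diverge is the saturated ODE~\eqref{EqnODESat}. The paper rotates into the eigenbasis of $\Lmat$, observes that the coordinate $\ones^T\pestcts{t}/\sqrt{\numnode}$ is conserved by the flow, characterizes all equilibria via $\satfun{\Vmat\gamma^*} = \alpha\ones$, and uses the conservation law to exclude the saturated branches $\alpha = \pm M$. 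You instead show that the box $[-M,M]^\numnode$ is forward-invariant via diagonal dominance, so the saturation never activates along the trajectory of interest and the linear analysis applies verbatim. Your route is arguably cleaner and delivers global asymptotic stability on the box (not merely uniqueness of the fixed point, which is all the paper's computation literally establishes), but it buys this with two hypotheses the paper never invokes: $M \geq \Xmax$, and non-positivity of the off-diagonal entries of $\Lmat$ --- true for Laplacian-type matrices but not for every consensus matrix satisfying~\eqref{EqnDefnConsensus}. Your explicit checklist of the stochastic-approximation hypotheses (step-size conditions, Lipschitz drift, conditional moment bounds, almost-sure boundedness of the iterates) is also more careful than the paper, which simply cites the theorem.

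One caution on the step you yourself flag as the main obstacle. Your proposed control of the $\ones$-component --- $L^2$-martingale convergence using $\sum_\itnum \stepsize_\itnum^2 < \infty$ --- does establish that $\ones^T\pest{\itnum}$ converges almost surely, which suffices for the boundedness condition (iv). But note that the limit of that martingale is $\numnode\Xbar$ \emph{plus a random offset}: for the $\aen$ model, $\ones^T(\Lmat\odot\Uvar{\itnum+1})\ones = \ones^T\Lmat\pest{\itnum} + \sum_{r}\sum_{t \in N(r)}\Lmat(r,t)\Xivar{\itnum+1}(t,r)$, whose second term has strictly positive variance, so the $\ones$-component performs a convergent but non-degenerate random walk. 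Neither your sketch nor the paper's proof supplies an argument that this offset vanishes, and this is precisely the degenerate direction where the tracking estimate~\eqref{EqnStrongPath} gives no information (the ODE is stationary there). You deserve credit for isolating this as the crux; just be aware that invoking the martingale convergence theorem here closes the boundedness issue but not the exact-consensus issue, and the paper's own proof is silent on the same point.
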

\begin{proof}
We prove this lemma by using the ODE method and stochastic
approximation---in particular, Theorem 1 from Kushner and
Yin~\cite{Kushner97}, which connects stochastic recursions of the
form~\eqref{EqnCompactUpdate} to the ordinary differential equation $d
\pestcts{t}/dt = \expec_{\Xivar{}}[\itnum \, \left(\Lmat \odot
\Uvar{}(\pestcts{t}, \Xivar{}) \right) \mid \, \pestcts{t}]$.  Using
the definition of $\Uvar{}$ in terms of $\Gfun{}$, for the $\aen$ and
$\ann$ models, we have
\begin{eqnarray*}
  \expec_{\Xivar{}} \left[\Gfun{}(\pest{}(\vertex), \Xivar{}(\vertex,
\rec)) \, \mid \, \pest{}(\vertex) \right] & = & \pest{}(\vertex),
\end{eqnarray*}
from which we conclude that with the stepsize choice
$\stepsize_\numnode = \Theta(1/\numnode)$, we have
\begin{eqnarray*}
\expec_{\Xivar{}}[\itnum \, \left(\Lmat \odot
\Uvar{}(\pestcts{t}, \Xivar{}) \right) \mid \, \pestcts{t}] & = & -\Lmat
 \pestcts{t}.
\end{eqnarray*}
By our assumptions on the eigenstructure of $\Lmat$, the system $d
\pestcts{t}/dt = -\Lmat \, \pestcts{t}$ is globally asymptotically
stable, with a line of fixed points $\{ \pestcts{} \in \real^\numnode
\, \mid \,\Lmat \pestcts{} = 0 \}$.  Given the initial condition
$\pestcts{0}(\vertex) = \X(\vertex)$, we conclude that $\pstar = \Xbar
\ones$ is the unique asymptotically fixed point of the ODE, so that
the claim~\eqref{EqnStrongPath} follows from Kushner and
Yin~\cite{Kushner97}.

For the $\bcm$ model, the analysis is somewhat more involved, since the
quantization function saturates the output at $\pm M$.  For the dithered
quantization model~\eqref{EqnBCM}, we have
\begin{eqnarray*}
\expec_{\Xivar{}}[\itnum \, \left(\Lmat \odot \Uvar{}(\pestcts{t},
\Xivar{}) \right) \mid \, \pestcts{t}] & = & -\Lmat \,
\satfun{\pestcts{t}},
\end{eqnarray*}
where $\satfun{\cdot}$ is the saturation function~\eqref{EqnODESat}.
We now claim that $\pstar$ is also the unique asymptotically stable
fixed point of the ODE $d \pestcts{t}/dt = -\Lmat \,
\satfun{\pestcts{t}}$ subject to the initial condition
$\pestcts{0}(\vertex) = \X(\vertex)$.  Consider the eigendecomposition
$\Lmat = \Vmat \Dlmat \Vmat^T$, where $\Dlmat = \diag \{0,
\eigval_2(\Lmat), \ldots, \eigval_\numnode(\Lmat) \}$.  Define the
rotated variable $\gammacts{t} \defn \Vmat^T \pestcts{t}$, so that
the ODE~\eqref{EqnODESat} can be re-written as
\begin{subequations}
\begin{eqnarray}
\label{EqnGammaOne}
d \gammacts{t}(1)/dt & = & 0 \\
\label{EqnGammaRest}
d \gammacts{t}(k)/dt & = & -\eigval_{k}(\Lmat) \Vmat_k^T \satfun{\Vmat
  \gammacts{t}}, \qquad \mbox{for $k=2, \ldots, \numnode$},
\end{eqnarray}
\end{subequations}
where $\Vmat_k$ denotes the $k^{th}$ column of $\Vmat$.

Note that $\Vmat_1 = \ones/\|\ones\|_2$, since it is associated with
the eigenvalue $\eigval_1(\Lmat) = 0$.  Consequently, the solution to
equation~\eqref{EqnGammaOne} takes the form
\begin{equation}
\label{EqnGammaOneNew}
\gammacts{t}(1) = \Vmat_1^T \pestcts{0} \; = \; \sqrt{\numnode} \,
\Xbar,
\end{equation}
with unique fixed point $\gamstar(1) = \sqrt{\numnode} \, \Xbar$,
where $\Xbar \defn \frac{1}{\numnode} \sum_{i=1}^\numnode \X(i)$ is
the average value,

A fixed point $\gamstar \in \real^\numnode$ for
equations~\eqref{EqnGammaRest} requires that $\Vmat_k^T \satfun{\Vmat
\gamstar} = 0$, for $k = 2, \ldots, \numnode$.  Given that the columns
of $\Vmat$ form an orthogonal basis, this implies that $\satfun{\Vmat
\gamstar} = \alpha \ones$ for some constant $\alpha \in \real$, or equivalently
(given the connection $\Vmat \gamstar = \pstar$)
\begin{equation}
\satfun{\pstar} \; = \; \alpha \ones.
\end{equation}
Given the piecewise linear nature of the saturation function, this
equality implies either that the fixed point satisfies the elementwise
inequality $\pstar > M$ (if $\alpha = M$); or the elementwise
inequality $\pstar < -M$ (if $\alpha = -M$); or as the final option,
the $\pstar = \alpha$ when $\alpha \in (-M, +M)$.  But from
equation~\eqref{EqnGammaOneNew}, we know that $\gamstar(1) =
\sqrt{\numnode} \, \Xbar \in [-M \, \sqrt{\numnode}, \; +M
\sqrt{\numnode}]$.  But we also have $\gamstar(1) =
\frac{\ones^T}{\sqrt{\numnode}} \, \pstar$ by definition, so that
putting together the pieces yields
\begin{equation}
-M \; < \; \frac{\ones \, \pstar}{\nodenum} \; < \; M,
\end{equation}
Thus the only possibility is that $\pstar = \alpha \ones$ for some
constant $\alpha \in (-M, +M)$, and the relation $\Vmat \gamstar =
\alpha \ones$ implies that $\alpha = \gamstar(1)/\sqrt{\nodenum} =
\Xbar$, which establishes the claim.
\end{proof}

\subsection{Proof of Theorem~\ref{ThmConsist}(b)}

We analyze the update~\eqref{EqnCompactUpdate} using results from
Benveniste et al~\cite{Benveniste90}.  In particular, given the
stochastic iteration $\pest{\itnum+1} = \pest{\itnum} +
\stepsize_\itnum H(\pest{\itnum},\Uvar{\itnum+1})$, define the
expectation $h(\pest{}) = \expec\left[H(\pest{},X)\right]$, its
Jacobian matrix $\nabla h(\pest{})$, and the covariance matrix
$\Condcov{}(\pest{}) = \expec\left[(H(\pest{},X) -
h(\pest{}))(H(\pest{},X) - h(\pest{}))^T\right]$.  Then Theorem 3
(p. 110) of Benveniste et al~\cite{Benveniste90} asserts that as long
as the eigenvalues $\eigval(\nabla h(\pest{}))$ are strictly below
$-1/2$. then
\begin{equation}
\label{eq:asnorm}
\sqrt{n}\left(\theta_n-\theta^*\right) \convdist N(0,Q),
\end{equation}
where the covariance matrix $Q$ is the unique solution to the Lyapunov
equation
\begin{equation}
\left(\frac{I}{2}+ \nabla h(\pstar)\right) Q + Q\left(\frac{I}{2}+
\nabla h(\pstar)\right)^T + \Condcov{}_{\pstar}=0.
\end{equation}

We begin by computing the conditional distribution $h(\pest{})$; for
the models $\aen$ and $\ann$ it takes the form
\begin{eqnarray}
h(\pest{}) & = & - \Lmat \; \pest{}
\end{eqnarray}
since the conditional expectation of the random matrix $\Uvar{}$ is
given by $\Exs[\Uvar{} \, \mid \, \pest{}] = \pest{} \; \ones$.  For
the $\bcm$ model, since the quantization is finite with maximum value
$M$, the expectation is given by
\begin{align}
h(\pest{}) &= -\Lmat \; \satfun{\pest{}}
\end{align}
where the saturation function was defined
previously~\eqref{EqnODESat}.  In addition, we computed form of the
the covariance matrix $\Condcov{}_{\pstar}$
previously~\eqref{EqnDefnCondcov}.  Finally, we note that
\begin{eqnarray}
\nabla h(\pstar) & = & -\Lmat
\end{eqnarray}
for all three models.  (This fact is immediate for models $\aen$ and
$\ann$; for the $\bcm$ model, note that Theorem~\ref{ThmConsist}(a)
guarantees that $\pstar$ falls in the middle linear portion of the
saturation function.)

\newcommand{\gammaest}[1]{\ensuremath{\beta^{#1}}}
\newcommand{\gammaeststar}{\ensuremath{\beta^*}}

We cannot immediately conclude that asymptotic
normality~\eqref{eq:asnorm} holds, because the matrix $\Lmat$ has a
zero eigenvalue ($\eigval_1(\Lmat) = 0$).  However, let us decompose
$\Lmat = \Vmat \Dlmat \Vmat^T$ where $\Vmat$ is the matrix with unit
norm columns as eigenvectors, and $\Dlmat = \diag \{0,
\eigval_2(\Lmat), \ldots, \eigval_\numnode(\Lmat) \}$.  Let $\Vtil$
denote the $\numnode \times (\numnode-1)$ matrix obtained by deleting
the first column of $\Vmat$.  Defining the $(\numnode-1)$ vector
$\gammaest{\itnum} = \Vtil^T \pest{\itnum}$, we can rewrite the update
in $(\numnode-1)$-dimensional space as
\begin{eqnarray}
\gammaest{\itnum+1} & = & \gammaest{\itnum}+\frac{1}{\itnum}
\left[-\Vmat^T\left(L\,\odot
\,\Uvar{\itnum+1}(\pest{\itnum})\right)\,\ones \right],
\end{eqnarray}
for which the new effective $h$ function is given by
$\tilde{h}(\gammaest{}) = -\Dlmattil \gammaest{}$, with $\Dlmattil =
\diag \{ \eigval_2(\Lmat), \ldots, \eigval_\numnode(\Lmat) \}$.  Since
$\eigval_2(\Lmat) > \frac{1}{2}$ by assumption, the asymptotic
normality~\eqref{eq:asnorm} applies to this reduced iteration, so that
we can conclude that
\begin{eqnarray*}
\sqrt{\itnum} \,  (\gammaest{\itnum} - \gammaeststar) & \convdist &  N(0, \Qmattil)
\end{eqnarray*}
where $\Qmattil$ solves the Lyapunov equation
\begin{equation*}
\left(\Dlmattil - \frac{I}{2} \right) \Qmattil + \Qmattil
\left(\Dlmattil - \frac{I}{2} \right)^T \; = \; \Vtil^T
\Condcov{}_{\pstar} \Vtil.
\end{equation*}
We conclude by noting that the asymptotic covariance of $\pest{\itnum}$
is related to that of $\gammaest{\itnum}$ by the relation
\begin{eqnarray}
\Qmat & = & \Vmat^T \begin{bmatrix} 0 & 0 \\ 0 & \Qmattil
\end{bmatrix} \Vmat,
\end{eqnarray}
from which Theorem~\ref{ThmConsist}(b) follows.

%%%%%%%%%%%%%%%%%%%%%%%%%%%%%%%%%%%%%%%%%%%%%%%%%%%%%%%%%%%%%%%%%%%%%%%%%%%%%%%%%%%%%%%%%%%%%%%%%%%%%%%%%%%%%%%%%%%%%%%%%%%%%%%%%%%%%%%%%%%%%%

\section{Discussion}
\label{SecDiscuss}

This paper analyzed the convergence and asymptotic behavior of
distributed averaging algorithms on graphs with general noise models.
Using suitably damped updates, we showed that it is possible to obtain
exact consensus, as opposed to approximate or near consensus, even in
the presence of noise.  We guaranteed almost sure convergence of our
algorithms under fairly general conditions, and moreover, under
suitable stability conditions, we showed that the error is
asymptotically normal, with a covariance matrix that can be predicted
from the structure of the consensus operator.  We provided a number of
simulations that illustrate the sharpness of these theoretical
predictions.  Although the current paper has focused exclusively on
the averaging problem, the methods of analysis in this paper are
applicable to other types of distributed inference problems, such as
computing quantiles or order statistics, as well as computing various
types of $M$-estimators.  Obtaining analogous results for more general
problems of distributed statistical inference is an interesting
direction for future research.

\ifthenelse{\equal{\doctype}{TECH}}
{
\subsection*{Acknowledgements}
This work was presented in part at the Allerton Conference on Control,
Computing and Communication, September 2007.  Work funded by
NSF-grants DMS-0605165 and CCF-0545862 CAREER to MJW.  The authors
thank Pravin Varaiya and Alan Willsky for helpful comments.

\bibliographystyle{plain} 
}
%%%%%%%%%%%%%%%%%%%%%%%%%%%%%%%%%%%%%%%%%%%%%%%%%%%%%%%%%%%%%%%%%%%%%%%%
{ \thanks{This work was presented in part
at the Allerton Conference on Control, Computing and Communication,
September 2007.  Work funded by NSF-grants DMS-0605165, CCF-0545862
CAREER and a Sloan Foundation Fellowship to MJW.}

\bibliographystyle{plain}
}

%%%%%%%%%%%%%%%%%%%%%%%%%%%%%%%%%%%%%%%%%%%%%%%%%%%%%%%%%%%%%%%%%%%%%
\appendix

%%%%%%%%%%%%%%%%%%%%%%%%%%%%%%%%%%%%%%%%%%%%%%%%%%%%%%%%%%%%%%%%%%%

\section{Proof of Lemma~\ref{LemOrder}}
\label{AppLemOrder}
We begin by noting that for the normalized graph Laplacian
$\Lapgraph{\graph}$, it is known that for any graph, the second
smallest eigenvalue satisfies the upper bound
$\eigval_{2}(\Lapgraph{\graph}) \leq \numnode/(\numnode-1) \leq 1$.
Moreover, we have $\trace(\Lapgraph{\graph}) = \numnode$.  See Lemma
1.7 in Chung~\cite{Chung} for proofs of these claims.

Using these facts, we establish Lemma~\ref{LemOrder} as follows.
Recall that by construction, we have $\Lstar =
\frac{\Lapgraph{\graph}}{\eigval_2(\Lapgraph{\graph})}$, so that the
second smallest eigenvalue of $\Lstar$ is $\eigval_2(\Lstar) = 1$, and
the remaining eigenvalues are greater than or equal to one.  Applying
Corollary~\ref{CorAMSE} to the $\ann$ model, we have
\begin{eqnarray*}
\amse(\Lmat; \pstar) & = & \frac{\sigma^2}{\numnode} \; \;
\sum_{i=2}^{\numnode} \; \left[\frac{[\eigval_i(\Lstar)]^2}{2
\eigval_i(\Lstar) - 1} \right],\\
& = & \frac{\sigma^2}{\numnode \; \eigval_2(\Lapgraph{\graph})} \; \;
\sum_{i=2}^{\numnode} \left[\frac{[\eigval_i(\Lapgraph{\graph})]^2}{2
\eigval_i(\Lapgraph{\graph}) - \eigval_2(\Lapgraph{\graph})} \right] \\
& \geq & \frac{\sigma^2}{2 \eigval_2(\Lapgraph{\graph}) \; \numnode}
\; \trace(\Lapgraph{\graph}) \\
& = & \frac{\sigma^2}{2 \eigval_2(\Lapgraph{\graph})}
\end{eqnarray*}
using the fact that $\trace(\Lapgraph{\graph}) = \numnode$.

In the other direction, using the fact that $\eigval_2(\Lstar) \geq 1$
and the bound $\frac{x^2}{2 x - 1} \leq x$ for $x \geq 1$. we have
\begin{eqnarray*}
\amse(\Lmat; \pstar) & = & \frac{\sigma^2}{\numnode} \; \;
\sum_{i=2}^{\numnode} \; \left[\frac{[\eigval_i(\Lstar)]^2}{2
\eigval_i(\Lstar) - 1} \right],\\
& \leq & \frac{\sigma^2}{\numnode} \trace(\Lstar) \\
& = & \frac{\sigma^2}{\eigval_{2}(\Lapgraph{\graph})\,\numnode}
\trace(\Lapgraph{\graph}) \\
& = & \frac{\sigma^2}{\eigval_2(\Lapgraph{\graph})}.
\end{eqnarray*}

%%%%%%%%%%%%%%%%%%%%%%%%%%%%%%%%%%%%%%%%%%%%%%%%%%%%%%%%%%%%%%%%%%

%\bibliography{MERGED_ram_mjwain}

%%%%%%%%%%%%%%%%%%%%%%%%%%%%%%%%%%%%%%%%%%%%%%%%%%%%%%%%%%%%%%%%%%%%%%%%%%

\end{document}